\title[Gap opening for Dirac pseudo-differential operators]{Gap opening in the spectrum of some Dirac-like pseudo-differential operators}
\author[J.-M. Barbaroux]{J.-M. Barbaroux}
\address[J.-M. Barbaroux]{Aix Marseille Univ, Universit\'e de Toulon, CNRS, CPT, Marseille, France}
\email{jean-marie.barbaroux@univ-tln.fr}
\author[H.D. Cornean]{H.D. Cornean}
\address[H.D. Cornean]{Department of Mathematical Sciences, Aalborg University
\\
Skjernvej 4A, 9220 Aalborg \O, Denmark}
\email{cornean@math.aau.dk}
\author[S. Zalczer]{S. Zalczer}
\address[S. Zalczer]{Aix Marseille Univ, Universit\'e de Toulon, CNRS, CPT, Marseille, France}
\email{sylvain.zalczer@univ-tln.fr}
\subjclass[2010]{Primary 81Q10, 82D80; Secondary 46N50, 47A10}
\keywords{Dirac operators, pseudo-differential operators, spectral gaps, multilayer graphene}
\newcommand\intdir[3]{\int_{#1}^\oplus #3 #2}
\newtheoremstyle{RRMPA}%
{}{}%
{\itshape}
{}
{\textsc}
{.}
{ }
{}
\newtheoremstyle{RRMPAd}%
{}{}%
{}
{}
{\itshape}
{.}
{ }
{}
\theoremstyle{RRMPA}
\newtheorem{prop}{PROPOSITION}
\newtheorem{lemme}{LEMMA}
\newtheorem{theorem}{THEOREM}
\theoremstyle{RRMPAd}
\newtheorem{remark}[theorem]{Remark}
\newtheorem{hypothesis}{Hypothesis}
\newcommand{\Dom}{\mathrm{Dom}}
\newcommand{\C}{\mathbb{C}} 
\newcommand{\Z}{\mathbb{Z}} 
\newcommand{\R}{\mathbb{R}} 
\newcommand{\Ran}{\mathrm{Ran}}
\newcommand{\spec}{\mathrm{Spec}}
\newcommand{\sps}[2]{\langle #1,#2 \rangle} 
\newcommand{\norm}[1]{\left\lVert #1 \right\rVert}
\newcommand{\bgamma}{\boldsymbol{\gamma}}
\newcommand{\bs}{\boldsymbol{\sigma}}
\newcommand{\bx}{\mathbf{x}}
\newcommand{\by}{\mathbf{y}}
\newcommand{\bp}{\boldsymbol{p}}
\newcommand{\bchi}{\boldsymbol{\chi}}
\newcommand{\bP}{\boldsymbol{\Phi}}
\newcommand{\bF}{\boldsymbol{F}}
\newcommand{\bG}{G}
\newcommand{\bz}{\mathbf{0}}
\newcommand{\Id}{\mathrm{Id}}
\newcommand{\bbf}{\boldsymbol{f}}
\newcommand{\bg}{\boldsymbol{g}}
\newcommand{\bK}{K}
\newcommand{\bU}{\boldsymbol{U}}
\newcommand{\bW}{\boldsymbol{W}}
\newcommand{\bPsi}{\boldsymbol{\Psi}}
\newcommand{\bUpsilon}{\boldsymbol{\Upsilon}}
\newcommand{\bpsi}{\boldsymbol{\psi}}
\newcommand{\bk}{\mathbf{k}}
\newcommand{\bm}{\mathbf{m}}
\newcommand{\fesh}{\mathcal{F}}
\newcommand{\B}{\mathcal{B}}
\renewcommand{\le}{\leqslant}
\renewcommand{\leq}{\leqslant}
 \renewcommand{\ge}{\geqslant}
 \renewcommand{\geq}{\geqslant}
\begin{document}

\begin{abstract}\scriptsize
 In this paper, we study the opening of a spectral gap for a class of 2-dimensional periodic Hamiltonians which include those modelling multilayer graphene. 
 The kinetic part of the Hamiltonian is given by $\bs \cdot \bF(-i\nabla)$, where $\bs$ denotes the Pauli matrices and $\bF$ is a sufficiently regular vector-valued function which equals 0 at the origin and grows at infinity. 
 Its spectrum is the whole real line. We prove that a gap appears for perturbations in a certain class of periodic matrix-valued potentials depending on $\bF$, and we study how this gap depends on different parameters.
\end{abstract}
\maketitle

\section{Introduction, model and main result}

Graphene is a two-dimensional material made of carbon atoms on a honeycomb lattice. Among its remarkable properties is its energy band structure, with two bands crossing at the Fermi level \cite{CGPNGG}. This particular structure has suggested to model the dynamics of one electron in a graphene sheet by the free massless two-dimensional Dirac operator.

An interesting problem is to study the electronic properties of a material which is not a single sheet of graphene but several stacked layers of graphene. In this case, the dynamics of the electron can be approximated by an effective Hamiltonian which typically is a $N$th order Dirac-like operator, 
$N$ being the number of layers (see \cite{MCK} and references therein).

One of the major problems linked with graphene is to tune an energy bandgap at the Fermi level, making graphene a semiconductor. To realize this, one of the possibilities is to use the so-called graphene antidot lattices, which consist of a sheet of graphene periodically patterned with obstacles such as holes. In the case of single-layer graphene antidot lattices, the gap opening has been  numerically achieved in \cite{FPFMBPJ} and was proved in \cite{barbaroux} with a mathematical approach. See also \cite{BCLS, BCZ} and references therein for rigorous studies of spectral properties of Dirac operators modelling graphene antidot lattices.   

The goal of this article is to generalize such gap opening results to higher-order Hamiltonians, including the ones for multilayer graphene.

Namely, we want to study gap opening in the spectrum under periodic pertubations of the Hamiltonian  
 $$
  H_0=\bs\cdot \bF(-i\nabla)
 $$
on $L^ 2(\mathbb{R}^ 2,\mathbb{C}^ 2)$,  where $\bs=( \sigma_1,\,\sigma_2,\,\sigma_3)$ denotes the usual Pauli matrices  
\begin{align*}
 \sigma_1
  =\left(
    \begin{array}{cc}
      0&1\\
      1&0
    \end{array}
   \right),\quad
 \sigma_2=\left(
  \begin{array}{cc}
   0&-i\\
   i&0
  \end{array}
 \right),\quad 
 \sigma_3
 =\left(
  \begin{array}{cc}
    1&0\\
    0&-1
\end{array}
 \right),
\end{align*} 
and $\bF:\mathbb{R}^2\rightarrow \mathbb{R}^3$.  More precisely, this means that for any $\bpsi\in\Dom(H_0)$ 
 \[
  H_0\bpsi(\bx)= \left(\mathcal{F}^{-1}
  \left[ \bs\cdot \bF(\cdot) \mathcal{F}\bpsi\right]\right)(\bx), 
 \]
where $\mathcal{F}$ denotes the Fourier transform on $L^2(\R^2,\C^2)$ 
          \[\mathcal{F}\bpsi(\bp)=\frac{1}{2\pi}\int_{\mathbb{R}^2}e^{-i\bp\cdot \bx}\bpsi(\bx)d\bx.\]

We suppose that the function $\bF$ fulfils the following assumptions.
\begin{hypothesis}\label{hyp1}
\begin{enumerate}[(i)]
           \item $\bF$ belongs to $\mathcal{C}^3(\R^2,\, \R^3)$.
           \item There exist constants $K_0'$, $K_i>0$ such that for all $\bp\in\mathbb{R}^2$,
           \begin{equation}\label{F}
           \begin{split}
           K_0'|\bp|^d\leqslant|\bF(\bp)|\leqslant K_0|\bp|^d \, ,\\
           |D^i\bF(\bp)|\leqslant K_i<\bp>^{d-|i|}
		   \end{split}           
           \end{equation} 
for some $d>0$ and any multi-index $i$ such that $1\leqslant |i|\leqslant 3$. Here $<\bp>=\sqrt{1+|\bp|^2}$ and $D^i$ denotes the multi-index partial derivative operator.
           \item There exists a $2\times 3$ rank 2 matrix $A$ such that in a neighbourhood of 0, 
           $$
           \bF(\bp)=|\bp|^{d-1}A\bp +O(|\bp|^{d+1}) .
           $$         
\end{enumerate}
\end{hypothesis}          

\begin{prop}
The operator $H_0$ is unitarily equivalent  to a multiplication operator by  
$\sigma_3 |\bF|$. It is then self-adjoint on $\Dom(H_0)=\mathcal{F}^{-1}(\Dom(|\bF(\cdot)|))$ and its spectrum is given by  the essential range of $\pm|\bF|$, which is $\mathbb{R}$ under Hypothesis~\ref{hyp1} (i) and (ii).
\end{prop}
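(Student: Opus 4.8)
The plan is to diagonalise $H_0$ fibrewise in the Fourier variable. Conjugating by the (unitary) Fourier transform $\mathcal{F}$ turns $H_0$ into multiplication by the $2\times 2$ Hermitian symbol $M(\bp):=\bs\cdot\bF(\bp)=\sigma_1F_1(\bp)+\sigma_2F_2(\bp)+\sigma_3F_3(\bp)$. The Pauli anticommutation relations $\sigma_j\sigma_k+\sigma_k\sigma_j=2\delta_{jk}\mathbf{1}$ yield $M(\bp)^2=|\bF(\bp)|^2\,\mathbf{1}$, so $M(\bp)$ is traceless with eigenvalues $\pm|\bF(\bp)|$ and, crucially, $\|M(\bp)\xi\|=|\bF(\bp)|\,\|\xi\|$ for every $\xi\in\C^2$. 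In particular the maximal domain of multiplication by $M(\cdot)$ is exactly $\{\bpsi:\ |\bF(\cdot)|\,\bpsi\in L^2\}=\Dom(|\bF(\cdot)|)$, with $|\bF(\cdot)|$ understood as componentwise scalar multiplication.

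Next I would exhibit a measurable field of unitaries diagonalising $M$. Set $\Omega:=\{\bp:\ \bF(\bp)\neq 0\}$; by the lower bound in \eqref{F}, $\R^2\setminus\Omega=\{0\}$ is Lebesgue-null and hence negligible. On $\Omega$ put $\bn:=\bF/|\bF|$; wherever $n_3\neq-1$ the vectors $(1+n_3,\,n_1+in_2)$ and $(-n_1+in_2,\,1+n_3)$ form an orthogonal pair of eigenvectors of $\bs\cdot\bn$ for the eigenvalues $+1$ and $-1$, and an analogous pair of formulas works wherever $n_3\neq1$. Since $\{n_3>-1\}$ and $\{n_3<1\}$ cover $\Omega$, normalising these vectors and patching the two formulas over the corresponding measurable subsets produces a measurable map $U\colon\R^2\to\mathrm{U}(2)$ (set $U=\mathbf{1}$ on the null set) with $U(\bp)^*M(\bp)U(\bp)=\sigma_3\,|\bF(\bp)|$ for a.e.\ $\bp$; one could alternatively invoke a measurable-selection argument applied to the continuous rank-one spectral projectors $\tfrac12(\mathbf{1}\pm\bs\cdot\bn)$. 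Then $\mathcal{U}\bpsi:=U(\cdot)^*\bpsi(\cdot)$ is unitary on $L^2(\R^2,\C^2)$ (it is fibrewise unitary), and $W:=\mathcal{U}\mathcal{F}$ satisfies $WH_0W^*=$ multiplication by $\sigma_3\,|\bF(\cdot)|$.

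For self-adjointness and the domain statement, I would use that multiplication by an a.e.-finite self-adjoint matrix-valued function is self-adjoint on its maximal domain; here this domain is $\{\bpsi:\ \sigma_3|\bF(\cdot)|\,\bpsi\in L^2\}=\{\bpsi:\ |\bF(\cdot)|\,\bpsi\in L^2\}=\Dom(|\bF(\cdot)|)$, the middle equality because $\sigma_3$ is unitary. Since $W$ is unitary and $\mathcal{U}$, being fibrewise unitary, maps $\Dom(|\bF(\cdot)|)$ onto itself, transporting back along $W$ gives that $H_0$ is self-adjoint on $\mathcal{F}^{-1}\big(\Dom(|\bF(\cdot)|)\big)$.

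Finally, $\spec(H_0)=\spec\big(\sigma_3|\bF(\cdot)|\big)$, and for a multiplication operator by a measurable matrix field $N(\cdot)$ one has $\lambda\in\spec$ iff $\{\bp:\ \mathrm{dist}\!\big(\lambda,\spec N(\bp)\big)<\varepsilon\}$ has positive Lebesgue measure for every $\varepsilon>0$. Since $\spec\big(\sigma_3|\bF(\bp)|\big)=\{\pm|\bF(\bp)|\}$, this set is exactly the essential range of $|\bF|$ together with that of $-|\bF|$. Under Hypothesis~\ref{hyp1}(i)--(ii), $|\bF|$ is continuous with $|\bF(0)|=0$ and $|\bF(\bp)|\ge K_0'|\bp|^d\to\infty$ as $|\bp|\to\infty$, so restricting to a half-line through the origin and applying the intermediate value theorem shows that the range of $|\bF|$ is $[0,\infty)$; for a continuous function on $\R^2$ the essential range (for Lebesgue measure) coincides with the closure of the range, so it equals $[0,\infty)$, and likewise the essential range of $-|\bF|$ is $(-\infty,0]$. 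Their union is $\R$, which is the last assertion. The only genuinely delicate step is the global measurable diagonalisation of $M$: the single-chart eigenvector formula degenerates where $n_3=-1$ (resp.\ $n_3=1$), so two charts must be glued measurably; everything else is routine.
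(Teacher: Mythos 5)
Your proof is correct and follows essentially the same route the paper invokes: conjugate by the Fourier transform, diagonalise the Hermitian symbol $\bs\cdot\bF(\bp)$ fibrewise using the Pauli algebra (the Foldy--Wouthuysen-type transformation of Thaller's treatment of the free Dirac operator), and read off domain and spectrum from the theory of matrix-valued multiplication operators. The paper's own proof is a one-line pointer to \cite{Thaller}; you have simply spelled out the details that reference supplies, including the measurable two-chart gluing of eigenvectors and the intermediate-value argument for the essential range.
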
 

\begin{proof}
The proof, identical to the one for the free Dirac operator, comes directly from the definition of $H_0$ through the Fourier transform which is unitary (cf. \cite{Thaller}).
\end{proof}

In order to open a gap around the zero energy, we will perturb $H_0$ with a periodic potential defined as follows. Let
\[\bchi=(\chi_1,\chi_2,\chi_3):\mathbb{R}^2\to\mathbb{R}^3\] 
where each $\chi_i$ is a bounded function with compact support included in the set $\Omega=]-\frac{1}{2},\frac{1}{2}]^2$.

Let $\beta>0$ and $\alpha\in]0,1]$.
The perturbed Hamiltonian is       
\begin{equation*} 
 H(\alpha,\beta)=H_0+\beta \sum_{\bgamma \in \mathbb{Z}^2}\bchi
 \left(\frac{\bx-  \bgamma }{\alpha }\right)\cdot\bs.
\end{equation*}

The operator $H(\alpha, \beta)$ is $\mathbb{Z}^2$-periodic and self-adjoint on $\Dom(H_0)$.

\begin{remark}
In \cite{barbaroux}, the authors treated the particular case corresponding to the free massless Dirac operator where $\bF(\bp)=(p_1,\,p_2,\,0)$ and $\chi_1=\chi_2=0$.
\end{remark}

Let us denote $\Phi_i=\int_\Omega\chi_i(\bx)d\bx$, $1\leq i\leq 3,$ and let us introduce the three-dimensional vector 
\[
\bP=(\Phi_i)_{1\leqslant i\leqslant3}\in \mathbb{R}^3.
\] 
We denote by $\bP_{\vert \vert}$ the projection of $\bP$ on $\Ran(A)$ and by $\Phi_\perp$ the projection on $\Ran(A)^\perp$.

Here is the main result of our paper.
\begin{theorem}\label{thm:main}

Suppose that $\Phi_\perp\neq0$. Let $d'=\min(d,2)$.
There exist some constants $\lambda_0,C>0$ and $\delta \in ]0,1[$ with $C\delta<\frac{|\Phi_\perp|}{2}$ such that for any $\alpha \in  ]0,1/2]$ and $\beta>0$ satisfying $\alpha^2\beta<\lambda_0$, $\alpha^{d'}\beta<\delta$ we have that the interval
\[ \left[-\alpha^2\beta\left(\frac{|\Phi_\perp|}{2}-C\alpha^{d'}\beta\right), \alpha^2\beta\left(\frac{|\Phi_\perp|}{2}-C\alpha^{d'}\beta\right)\right]\]
belongs to the resolvent set $\rho(H(\alpha,\beta))$. 
\end{theorem}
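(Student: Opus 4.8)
The natural approach is Floquet--Bloch theory. Since $H(\alpha,\beta)$ is $\mathbb{Z}^2$-periodic, it decomposes as a direct integral $\int_{\Omega^*}^\oplus H(\alpha,\beta)(\bk)\, d\bk$ over the Brillouin zone $\Omega^*=[-\pi,\pi]^2$, with fiber operators acting on $L^2(\Omega,\mathbb{C}^2)$. A point $E$ lies in the resolvent set of $H(\alpha,\beta)$ if and only if $E$ lies in the resolvent set of every fiber $H(\alpha,\beta)(\bk)$, with a uniform bound on the fiber resolvents. So the task is to show that for every $\bk$, the fiber operator has no spectrum in the claimed interval around $0$.

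**Rescaling and the effective model.** First I would rescale: conjugating by the dilation $\bx\mapsto\alpha\bx$ turns the potential $\beta\sum_\bgamma\bchi((\bx-\bgamma)/\alpha)\cdot\bs$ into a potential periodic with respect to $\alpha\mathbb{Z}^2$ and with $\bchi$ now supported in $\alpha\Omega$; meanwhile $H_0=\bs\cdot\bF(-i\nabla)$ scales like $\alpha^{-d}$ times $\bs\cdot\bF(-i\nabla)$ near the origin because of Hypothesis~\ref{hyp1}(iii). More useful is to work directly in the fiber picture: for each $\bk$, the relevant low-energy regime is dominated by the lowest Floquet eigenvalue branch. The heuristic is that at small $\alpha$, the potential term, after integrating against the (nearly constant on the scale $\alpha$) low-momentum eigenfunctions of $H_0(\bk)$, acts effectively as the averaged matrix $\beta\alpha^2\,\overline{\bchi}\cdot\bs$ where $\alpha^2\overline{\bchi}\cdot\bs$ has integral $\alpha^2\bP\cdot\bs$ — but crucially the part $\bP_{||}\cdot\bs$ of this that lies in $\Ran(A)$ can be gauged away / absorbed by a shift of the quasimomentum, to leading order, because near $0$ one has $\bF(\bp)\approx|\bp|^{d-1}A\bp$, so $A\bp$ ranges over $\Ran(A)$ and a constant added to $A\bp$ is like shifting $\bp$. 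What survives is the component $\Phi_\perp\,\hat e_\perp\cdot\bs$ orthogonal to $\Ran(A)$, which anticommutes appropriately with the kinetic term and therefore genuinely opens a gap of size $\sim\alpha^2\beta|\Phi_\perp|$ — exactly as a mass term opens a gap $2m$ in the Dirac operator $\bs\cdot\bp+m\sigma_3$ after a rotation of the Pauli frame.

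**Execution steps.** Concretely I would: (1) set up the Floquet decomposition and reduce to a uniform-in-$\bk$ lower bound on $|H(\alpha,\beta)(\bk)|$; (2) split each fiber into a ``low-energy'' spectral subspace (spanned by the two Floquet eigenvectors of $H_0(\bk)$ closest to $0$, i.e. roughly $\pm|\bF(\bk)|$ for $\bk$ small, plus the contributions from small $|\bp|$) and its complement, via a Feshbach--Schur / Grushin reduction with respect to the spectral projection of $H_0$ onto a window of size $\sim\alpha^2\beta$; (3) on the high-energy complement, $H_0$ is invertible with norm of the inverse $\lesssim$ (window size)$^{-1}$, while the perturbation has norm $\lesssim\beta\alpha^2$ (its $L^1$ norm is $\sim\beta\alpha^2$ and on the relevant subspace of slowly-varying functions the operator norm is comparable), so the Schur complement is well-defined once $\alpha^2\beta<\lambda_0$; (4) compute the effective $2\times 2$ operator on the low-energy subspace: its leading term is $H_0$ restricted there plus $\beta\alpha^2\,\bP\cdot\bs$ plus errors; (5) use Hypothesis~\ref{hyp1}(iii), i.e. $\bF(\bp)=|\bp|^{d-1}A\bp+O(|\bp|^{d+1})$, together with the rank-$2$ property of $A$, to write $\bP\cdot\bs=\bP_{||}\cdot\bs+\Phi_\perp\,\bs_\perp$ and absorb $\beta\alpha^2\bP_{||}\cdot\bs$ into a shift of $\bF$ (equivalently of $\bk$), so that the effective operator becomes, after a unitary rotation in $\mathbb{C}^2$ putting $\bs_\perp$ into the $\sigma_3$ slot, of the form $|\widetilde\bF(\bk)|$-type kinetic part plus mass $\beta\alpha^2\Phi_\perp\sigma_3$, whose absolute value is bounded below by $\beta\alpha^2|\Phi_\perp|$; (6) track all error terms: the corrections to the effective mass term coming from the non-constancy of the eigenfunctions on scale $\alpha$, from the $O(|\bp|^{d+1})$ remainder in (iii), and from the Schur complement, are all of size $O(\beta^2\alpha^{2+d'})=O(\beta\alpha^2)\cdot O(\beta\alpha^{d'})$; hence the gap half-width is $\beta\alpha^2(|\Phi_\perp|/2-C\alpha^{d'}\beta)$, which is positive precisely under $\alpha^{d'}\beta<\delta$ with $C\delta<|\Phi_\perp|/2$.

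**Main obstacle.** The delicate point is step (5)--(6): making rigorous the claim that the in-plane part $\bP_{||}\cdot\bs$ of the averaged potential can be removed by a quasimomentum shift, and sharply controlling the size of the residual terms so that the error is genuinely $O(\alpha^{d'})$ relative to the main gap (and not, say, $O(\alpha^{\min(d,1)})$). This requires carefully expanding $\bF$ near the origin using all three derivatives allowed by Hypothesis~\ref{hyp1}(i), handling separately the cases $d<2$ and $d\geq2$ (hence the appearance of $d'=\min(d,2)$), and checking that the Floquet eigenfunctions of $H_0(\bk)$ at low momentum are close enough to constants — on the support of $\bchi(\cdot/\alpha)$, which has diameter $\sim\alpha$ — that testing the potential against them really does produce $\alpha^2\bP\cdot\bs$ up to the stated error. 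A secondary technical annoyance is the non-smoothness of $|\bF|$ at $\bp=0$ and of $|\bp|^{d-1}$ for $d<1$, which must be accommodated in the Grushin setup by choosing the spectral window for $H_0$ adaptively rather than as a fixed energy cutoff.
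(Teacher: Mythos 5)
Your high-level plan (Floquet--Bloch decomposition followed by a Feshbach/Grushin reduction to an effective operator on a low-energy subspace, then controlling the Schur-complement error) is the same as the paper's. However, two essential components of your proposal either differ from the paper in a way that would make the argument fail, or are asserted without the ingredient that actually produces them.

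\textbf{Choice of projection and control of the complement.} You propose a spectral window of width $\sim\alpha^2\beta$ for the Grushin reduction. With that choice the complement resolvent has norm $\sim(\alpha^2\beta)^{-1}$; even granting your (optimistic) bound $\sim\beta\alpha^2$ for the perturbation, the Schur-complement correction you would get is $\beta\alpha^2\cdot(\alpha^2\beta)^{-1}\cdot\beta\alpha^2 = \beta\alpha^2$, i.e.\ \emph{comparable} to the diagonal gap $\sim\alpha^2\beta|\Phi_\perp|$, not small relative to it. The paper instead projects onto the zero Fourier mode $P_\bz=|\Psi_0\rangle\langle\Psi_0|\otimes\mathbf{1}_{\C^2}$ (which is $\bk$-independent). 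On $\Ran Q_\bz$ the free fiber operator is bounded below by the constant $K_0'\pi^d$ (Lemma~\ref{lem4}), so the complement resolvent is uniformly $O(1)$ and does \emph{not} shrink the gap. Your window scheme also has uniformity problems in $\bk$: for $\bk$ bounded away from $0$, a window of width $\alpha^2\beta$ is empty, so the decomposition is ill-defined there.

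\textbf{Where $\alpha^{d'}$ really comes from.} You assert that the Schur-complement error is $O(\beta^2\alpha^{2+d'})$ but give no mechanism for the extra $\alpha^{d'}$. In the paper this factor is produced by a specific analytic step: one shows that the free resolvent kernel has the local singularity $|(H_0-i)^{-1}(\bx,\bx')|\lesssim |\bx-\bx'|^{-(2-d')}$ and off-diagonal decay $|\bx-\bx'|^{-3}$ (Proposition~\ref{kernel-existence}, Lemma~\ref{reso}), and then applies the Hardy--Littlewood--Sobolev inequality, exploiting that $\bchi_\alpha$ is supported at scale $\alpha$, to get
$\||\bchi_\alpha|^{1/2}(h_\bk^{(0)}-i)^{-1}|\bchi_\alpha|^{1/2}\|\lesssim\alpha^{d'}$. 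This sandwich estimate — not an $L^1$ bound on the potential — is what makes the Schur complement $O(\beta^2\alpha^{2+d'})$. Without it (or an equivalent), your step~(6) is a gap.

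\textbf{The effective-mass argument is simpler than you think.} Your ``absorb $\bP_{||}\cdot\bs$ by a quasimomentum shift'' step is an overcomplication, and as you note yourself, it runs into trouble because $|\bp|^{d-1}A\bp$ is not invertible at $\bp=0$ when $d\ne 1$. The paper avoids this entirely: in Lemma~\ref{lem2.1}, for small $\bp$ one writes $\bF(\bp)+\lambda\bP = \bigl(|\bp|^{d-1}A\bp+\lambda\bP_{||}\bigr)+\lambda\Phi_\perp+O(|\bp|^{d+1})$; the first bracket lies in $\Ran(A)$ and $\lambda\Phi_\perp$ lies in $\Ran(A)^\perp$, so by Pythagoras $|\bF(\bp)+\lambda\bP|\ge\lambda|\Phi_\perp|-K|\bp|^{d+1}$. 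For $|\bp|$ above a threshold $\sim\lambda^{1/d}$ the bound $|\bF|\ge K_0'|\bp|^d$ takes over. No quasimomentum shift, unitary rotation, or invertibility of the linearization is needed.
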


\begin{remark}In \cite{barbaroux}, this condition is achieved since the kinetic part is in the subspace spanned by $\sigma_1$ and $\sigma_2$ and the potential is in $\mathrm{Span}(\sigma_3)$.
\end{remark}

As in \cite{barbaroux}, we use the Floquet-Bloch transformation to come to a problem on the unit square, where the gradient has a well-known eigenbasis.
Then, we use a Feshbach map argument, separating the problem between constant and non-constant modes. While the estimate on the constant subspace is direct, we need to use decay of the resolvent of the free operator and repeated applications of the resolvent equation to prove the invertibility on the orthogonal.

The paper is organized as follows. In Section 2 we perform a detailed analysis of the integral kernel of the free resolvent, including its local singularities and off-diagonal decay.
In Section~3 we give the proof of the main theorem, while in the Appendix we summarize the results we need from the Bloch-Floquet transformation.

\section{Resolvent decay and integral kernel}

In this chapter, we study the behaviour of the integral kernel for the free resolvent. Due to the expression of the kinetic energy, we do not have an explicit formula for this integral kernel. Nevertheless, we can  prove that the integral kernel exists, it has integrable local singularities and has a sufficiently fast off-diagonal polynomial decay.

The result states as follows.

\begin{prop}\label{kernel-existence}
Let $M_d$ be the function defined on $\mathbb{R}^2\times\mathbb{R}^2$ by:
\begin{align*} 
 M_d(\bx,\bx')=\left\{\begin{matrix}\frac{1}{|\bx-\bx'|^{2-d}}+1 &\text{if}& |\bx-\bx'|\leqslant 1\text{ and }d\neq 2 ; \\
                      -\log|\bx-\bx'|+1&\text{if}&|\bx-\bx'|\leqslant 1\text{ and }d=2 ; \\
                      \frac{1}{|\bx-\bx'|^3} &\text{if}& |\bx-\bx'|\geqslant 1 .\end{matrix}\right.
                      \end{align*}
                      The operator $(H_0-i)^{-1}$ has an integral kernel, denoted by $(H_0-i)^{-1}(\bx,\bx')$, such that 
\begin{equation}\label{majNI}
 \left| (H_0-i)^{-1}(\bx,\bx') \right| \leqslant M_d(\bx,\bx')\, .
\end{equation}
\end{prop}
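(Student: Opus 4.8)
The plan is to express the resolvent kernel via the functional calculus and the Fourier transform, then split the momentum integral into a low-frequency and a high-frequency region, estimating each separately. Since $H_0$ is unitarily equivalent to multiplication by $\sigma_3|\bF|$, we have $(H_0-i)^{-1} = \mathcal{F}^{-1}\, m(\bp)\, \mathcal{F}$ where $m(\bp) = (\bs\cdot\bF(\bp) - i)^{-1}$. Using $(\bs\cdot\bF(\bp))^2 = |\bF(\bp)|^2\,\mathbf{1}$, one computes explicitly
\[
 m(\bp) = \frac{\bs\cdot\bF(\bp) + i}{|\bF(\bp)|^2 + 1},
\]
so the kernel is $(H_0-i)^{-1}(\bx,\bx') = \frac{1}{(2\pi)^2}\int_{\R^2} e^{i\bp\cdot(\bx-\bx')}\, m(\bp)\, d\bp$, interpreted as an oscillatory integral. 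Each entry of $m$ is, by Hypothesis~\ref{hyp1}, a function behaving like $|\bp|^{-d}$ or $|\bp|^{1-d}$ at infinity (hence in $L^1$ for the decaying part once $d>0$ is large enough, or needing regularization otherwise) and like $|\bp|\cdot|\bp|^{d-1} = |\bp|^d$ near the origin, so $m$ is continuous and bounded; the issue is purely the decay at infinity of $m$ and the resulting local singularity of its inverse Fourier transform.

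For the local singularity bound (the regime $|\bx-\bx'|\le 1$), I would fix a smooth cutoff $\theta$ equal to $1$ near the origin, write $m = \theta m + (1-\theta)m$, and Fourier-transform each piece. The piece $\theta m$ is compactly supported and bounded, hence its transform is bounded; this contributes the ``$+1$'' in $M_d$. For $(1-\theta)m$, I would use that each entry is $\mathcal{C}^3$ away from $0$ with $|D^i[(1-\theta)m](\bp)| \lesssim \scal{\bp}^{-d-|i|}$ for $|i|\le 3$ (this follows from the chain rule applied to $m = (\bs\cdot\bF+i)/(|\bF|^2+1)$ together with the derivative bounds \eqref{F} and the lower bound $|\bF(\bp)|\ge K_0'|\bp|^d$). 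A standard Littlewood--Paley / dyadic decomposition in $|\bp|$ then gives that the inverse Fourier transform of $(1-\theta)m$ is bounded by $|\bx-\bx'|^{-(2-d)}$ when $d<2$, by $-\log|\bx-\bx'|$ when $d=2$, and is bounded when $d>2$: on the dyadic shell $|\bp|\sim 2^j$ the integral is $O(2^{j(2-d)})$ with an extra gain $(2^j|\bx-\bx'|)^{-3}$ from integrating by parts three times when $2^j|\bx-\bx'|\ge 1$, and one sums the geometric-type series, the borderline $d=2$ producing the logarithm.

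For the off-diagonal decay (the regime $|\bx-\bx'|\ge 1$), I would integrate by parts three times in $\bp$ in the full integral $\int e^{i\bp\cdot(\bx-\bx')} m(\bp)\,d\bp$, using $e^{i\bp\cdot(\bx-\bx')} = |\bx-\bx'|^{-2}\,\Delta_{\bp}^{?}\cdots$; more precisely each integration by parts produces a factor $|\bx-\bx'|^{-1}$ and moves a derivative onto $m$, and after three such steps the integrand is $O(\scal{\bp}^{-d-3})$, which is integrable in $\R^2$ for any $d>0$. This yields the bound $|\bx-\bx'|^{-3}$. Some care is needed near $\bp=0$ where $m$ is only $\mathcal{C}^3$ in the weak sense coming from Hypothesis~\ref{hyp1}(iii) (the factor $|\bp|^{d-1}$ may be singular for $d<1$); there I would again split with the cutoff $\theta$, treat $\theta m$ separately (it is compactly supported and its transform decays fast since $\theta m$ has whatever smoothness survives, or one simply bounds its contribution by controlling $\widehat{\theta m}$ directly using that $|\bp|^{d-1}A\bp$ and the remainder are integrable with integrable first derivatives away from a shrinking neighborhood), and handle the remainder as above.

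The main obstacle I expect is the low-frequency analysis: near $\bp=0$ the symbol $m(\bp)$ inherits the non-smoothness of $|\bF(\bp)| \sim |A\bp|$-type behavior, so $m$ need not be $\mathcal{C}^3$ at the origin when $d$ is not an even integer, and the naive three-fold integration by parts is not licensed there. Quantifying exactly how many derivatives of $m$ survive near $0$, and checking that the worst-case singularity $|D^i m(\bp)| \lesssim |\bp|^{d-|i|}$ near $0$ still leads to an $L^1$ (or compactly-supported-bounded) contribution for each of the required three orders, is the delicate point; this is presumably where Hypothesis~\ref{hyp1}(iii) is used in an essential way, to guarantee that the leading term $|\bp|^{d-1}A\bp$ has a controlled transform and the $O(|\bp|^{d+1})$ remainder is smoother.
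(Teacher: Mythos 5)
Your approach is essentially the same as the paper's: express the kernel as an inverse Fourier transform of the matrix symbol $\bG(\bp)=(\bs\cdot\bF(\bp)-i)^{-1}$, split low and high frequencies, and use three integrations by parts against the phase to get the $|\bx-\bx'|^{-3}$ tail and the local-singularity estimate. Your dyadic (Littlewood--Paley) decomposition of the local-singularity estimate is a more systematic implementation of what the paper does by hand via the concrete three-region split $\{|\bp|\le 1\}$, $\{1\le|\bp|\le|\bx-\bx'|^{-1}\}$, $\{|\bp|\ge|\bx-\bx'|^{-1}\}$ (with a boundary-term integration by parts and the divergence theorem in the third region). The paper also inserts a regularizer $e^{-\epsilon\scal{\bp}}$ to make the Fourier integral absolutely convergent and passes to $\epsilon\to 0$ by dominated convergence; your dyadic cutoffs would serve an equivalent purpose, but you would have to make the passage to the limit explicit rather than just calling the expression an oscillatory integral.

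The concern you flag about low-frequency smoothness is a misreading of the hypotheses, and the paper's proof shows why it does not arise. Hypothesis~\ref{hyp1}(i) asserts outright that $\bF\in\mathcal{C}^3(\R^2,\R^3)$, and since $(w-i)^{-1}$ is analytic in the Hermitian matrix $w=\bs\cdot\bF$ (the denominator $|\bF|^2+1\ge 1$ never vanishes), $\bG$ is $\mathcal{C}^3$ on all of $\R^2$, including at $\bp=0$. Combined with the derivative bounds in Hypothesis~\ref{hyp1}(ii), this yields the global estimate $|D^N\bG(\bp)|\le C_N\scal{\bp}^{-|N|-d}$ for $|N|\le 3$, which near the origin just says $|D^N\bG|\le C_N$ — there is no worst-case singularity $|\bp|^{d-|i|}$ to control. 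Thus the low-frequency piece $\theta\bG$ is $\mathcal{C}^3$ and compactly supported, and its transform is trivially bounded. Hypothesis~\ref{hyp1}(iii) is not needed anywhere in this proposition; it is used later, in Lemma~\ref{lem2.1}, to bound $|\bF(\bp)+\lambda\bP|$ from below for small $\bp$. So the ``delicate point'' you identify is in fact not present, and the rest of your outline goes through.
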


The proof of this Proposition is based on the following Lemma, and is postponed to the end of this section.

\begin{lemme}\label{reso}
There exists some $C>0$ such that for all $\bbf$ and $\bg\in L^2(\mathbb{R}^2,\mathbb{C}^2)$ we have
\begin{equation}\label{majoreso}|\langle \bbf,(H_0\pm i)^{-1}\bg\rangle|\leqslant C\int_{\mathbb{R}^2}\int_{\mathbb{R}^2}|\bbf(\bx)|M_d(\bx,\bx')|\bg(\bx')|d\bx d\bx'.\end{equation}
\end{lemme}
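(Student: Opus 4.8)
The plan is to reduce the bound on $\langle \bbf,(H_0\pm i)^{-1}\bg\rangle$ to a Fourier-side analysis of the symbol $(\bs\cdot\bF(\bp)\mp i)^{-1}$. First I would use the explicit algebraic identity $(\bs\cdot\bF(\bp)\mp i)(\bs\cdot\bF(\bp)\pm i)=(|\bF(\bp)|^2+1)\id$, valid because $(\bs\cdot\bv)^2=|\bv|^2\id$ for any $\bv\in\R^3$, to write
\[
(\bs\cdot\bF(\bp)\mp i)^{-1}=\frac{\bs\cdot\bF(\bp)\pm i}{|\bF(\bp)|^2+1}.
\]
Thus $(H_0\mp i)^{-1}$ is the Fourier multiplier with matrix-valued symbol $m_\pm(\bp)=\big(\bs\cdot\bF(\bp)\pm i\big)\big(|\bF(\bp)|^2+1\big)^{-1}$, and each matrix entry of $m_\pm$ is a scalar function of the form $\bF_j(\bp)(|\bF(\bp)|^2+1)^{-1}$ or $\pm i(|\bF(\bp)|^2+1)^{-1}$. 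By Hypothesis \ref{hyp1}(ii) these symbols are $O(<\bp>^{d})/<\bp>^{2d}=O(<\bp>^{-d})$ at infinity (and bounded near $0$), so each is in $L^1+L^2$ and $(H_0\mp i)^{-1}$ has a bona fide matrix-valued integral kernel $G_\pm(\bx-\bx')$ given by the inverse Fourier transform of $m_\pm$. The lemma then follows from the pointwise bound $|G_\pm(\bz)|\le C\, M_d(\bz)$ together with Fubini.

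The core of the argument is therefore the estimate $|G_\pm(\bz)|\le C\,M_d(\bz)$, which I would establish in two regimes. For $|\bz|\le 1$ (local singularity): split $m_\pm=\chi(\bp)m_\pm(\bp)+(1-\chi(\bp))m_\pm(\bp)$ with $\chi$ a smooth cutoff near the origin. The low-frequency piece is $O(|\bp|^{d-1})$ near $0$ (using Hypothesis \ref{hyp1}(iii) for the $\bF_j$-entries and $O(1)$, indeed $O(|\bp|)$ after subtracting a constant, for the $i$-entry — the constant gives a delta which is harmless away from the diagonal, or one checks the entry is continuous up to $O(|\bp|^{2d})$); homogeneity/scaling of its Fourier transform in two dimensions produces the singularity $|\bz|^{-(2-d)}$ when $d\neq 2$ and $-\log|\bz|$ when $d=2$, plus lower-order terms absorbed in the ``$+1$''. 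The high-frequency tail $(1-\chi)m_\pm\in C^3$ decays like $<\bp>^{-d}$ with derivatives gaining powers (Hypothesis \ref{hyp1}(i)--(ii)), so its Fourier transform is bounded, contributing to the ``$+1$''. For $|\bz|\ge 1$ (off-diagonal decay): integrate by parts three times against $e^{i\bp\cdot\bz}$; each integration by parts costs one derivative of $m_\pm$, which by Hypothesis \ref{hyp1}(ii) and the quotient rule is integrable in $\bp$ once $|i|\ge 1$ (the worst term behaves like $<\bp>^{d-1}/<\bp>^{2d}\sim<\bp>^{-d-1}$, and similarly for higher derivatives, all integrable since $d>0$), yielding $|G_\pm(\bz)|\le C|\bz|^{-3}$. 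This is precisely why $\bF\in\mathcal{C}^3$ is assumed.

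The main obstacle is the low-frequency analysis: near $\bp=0$ the symbol entries $\bF_j(\bp)/(|\bF(\bp)|^2+1)$ are only $\mathcal{C}^3$ but behave like $|\bp|^{d-1}(A\bp)_j$, a non-smooth (when $d$ is not an odd integer) homogeneous-type function of degree $d$, and one must extract the sharp singularity of its inverse Fourier transform in $\R^2$ without an explicit formula. I would handle this by a dyadic (Littlewood--Paley) decomposition $\sum_{k\le 0}\psi(2^{-k}\bp)m_\pm(\bp)$: on the annulus $|\bp|\sim 2^k$ the symbol has size $\sim 2^{kd}$ and its $j$-th derivatives size $\sim 2^{k(d-|i|)}$ for $|i|\le 3$, so integrating by parts shows the $k$-th piece contributes $\lesssim \min(2^{k(d+2)}, 2^{k(d+2)}(2^k|\bz|)^{-3})$ to $G_\pm(\bz)$; summing over $k\le 0$ gives $\lesssim |\bz|^{-(2-d)}$ for $d<2$, $\lesssim |\bz|^{-(2-d)}$ (in fact bounded, hence absorbed in $+1$) for $d>2$ and the logarithm exactly at $d=2$, matching $M_d$. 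The remaining estimates (high frequencies, integration by parts for $|\bz|\ge 1$, Fubini) are routine given Hypothesis \ref{hyp1}.
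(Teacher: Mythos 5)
Your opening moves match the paper's in spirit: the algebraic identity $(\bs\cdot\bF\mp i)^{-1}=(\bs\cdot\bF\pm i)(|\bF|^2+1)^{-1}$ gives exactly the derivative estimates $|D^N m_\pm(\bp)|\leq C_N\langle\bp\rangle^{-d-|N|}$ that the paper records as \eqref{eq:est-G(p)}, and the treatment of $|\bz|\geq 1$ by three integrations by parts (using $\bF\in\mathcal{C}^3$) is essentially identical to the paper's derivation of \eqref{majoX}. The paper also inserts an explicit regularizer $e^{-\epsilon\langle\bp\rangle}$ to justify the kernel manipulations before passing to the limit, a step you leave implicit but which would need to be supplied since $m_\pm\notin L^1(\R^2)$ for $d\leq 2$.

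The genuine error is in your analysis of the local singularity: you have the source of the singularity backwards. The low-frequency piece $\chi(\bp)m_\pm(\bp)$ is a \emph{bounded, compactly supported} function (near $\bp=0$ the symbol tends to the constant $\pm i$, and the $\bF_j$-entries vanish like $|\bp|^d$, not $|\bp|^{d-1}$); its inverse Fourier transform is therefore an entire, globally bounded function of $\bz$ and cannot produce any singularity as $\bz\to 0$. Conversely, the high-frequency tail $(1-\chi)m_\pm$ decays only like $\langle\bp\rangle^{-d}$ and, for $d\leq 2$, is \emph{not} in $L^1(\R^2)$; its inverse Fourier transform is precisely what blows up like $|\bz|^{d-2}$ (or $-\log|\bz|$ at $d=2$) near the diagonal, contrary to your claim that it ``is bounded, contributing to the $+1$.'' This is exactly what the paper makes transparent in \eqref{eq:est-K}--\eqref{majo3}: the region $|\bp|\leq 1$ gives a constant, the annulus $1\leq|\bp|\leq|\bx-\bx'|^{-1}$ gives the singular contribution $\int_1^{|\bz|^{-1}}r^{-d}r\,dr$, and the region $|\bp|\geq|\bz|^{-1}$ is controlled by integration by parts. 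Your dyadic decomposition would work if written for high frequencies: sum $\psi(2^{-k}\bp)m_\pm$ over $k\geq 0$, with symbol size $\sim 2^{-kd}$ (not $2^{kd}$) on $|\bp|\sim 2^k$, each piece contributing $\lesssim\min\bigl(2^{k(2-d)},\,2^{k(2-d)}(2^k|\bz|)^{-3}\bigr)$; summing over $k\geq 0$ then gives $|\bz|^{d-2}$ for $d<2$, $\log|\bz|^{-1}$ for $d=2$, and $O(1)$ for $d>2$. As written, with $k\leq 0$ and size $2^{kd}$, the sum is $O(1)$ and fails to capture the singularity. Finally, your parenthetical ``the worst term behaves like $\langle\bp\rangle^{-d-1}$, \dots all integrable since $d>0$'' is imprecise: in $\R^2$, $\langle\bp\rangle^{-d-1}$ is integrable only for $d>1$, which is why one must perform all three integrations by parts before invoking integrability.
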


\begin{remark}
   If $d> 2$ then $M_d$ is bounded while if $d<2$, $M_d$ has a local singularity of the type  $\frac{1}{|\bx-\bx'|^{2-d}}$. 
   This is why we denote $d'=\min(d,2)$; we can then write if $d\neq 2$ $M_d\leq\frac{2}{|\bx-\bx'|^{2-d'}}$.
\end{remark}

\begin{proof}[Proof of Lemma~\ref{reso}]
For $\bp\in \R^2$, we define the $2\times 2$ matrix 
$$
 \bG(\bp)=(\bs\cdot \bF(\bp)-i)^{-1}.
$$ 
The operator of multiplication by $\bG$ is bounded on $L^2(\R^2,\, \C^2)$.    
 
Remind that $<\bp>=\sqrt{1+|\bp|^2}$.
We define, for $\epsilon >0$, the regularized kernel
 \begin{align*}
  \bK_\epsilon(\bx,\bx')= \frac{1}{2\pi} \int_{\mathbb{R}^2}e^{i\bp\cdot(\bx-\bx')}
  \bG(\bp)e^{-\epsilon<\bp>}d\bp = \mathcal{F}^{-1}
  (\bG e^{-\epsilon <\cdot>})(\bx-\bx').
 \end{align*}
 
The estimates \eqref{F} of Hypothesis~\ref{hyp1}(ii) implies that for any multi-index $N$ such that $|N|\leqslant 3$ there exists $C_N>0$ such that
\begin{equation}\label{eq:est-G(p)}
 |D^N \bG(p)|\leqslant\frac{C_N}{<\bp>^{|N|+d}}.
\end{equation}
For $\bx=(x_1,x_2)$ and $\bx'=(x'_1, x'_2)$, choose $l\in\{1,2\}$ such that  $|\bx-\bx'|\leq\sqrt{2}|x_l-x'_l|$. By repeated integrations by part, we find that for any integer $M\leq 3$ we have :
 \begin{equation}\label{K_epsbis}
 (x_l-x_l')^M \bK_\epsilon(\bx,\bx')=i^{M}\frac{1}{2\pi}\int_{\mathbb{R}^2}e^{i\bp\cdot(\bx-\bx')}\frac{\partial^M}{\partial p_l^M}\left(\bG(\bp)e^{-\epsilon<\bp>}\right)d\bp .
 \end{equation}
Hence we have 
\begin{equation}\label{K_eps}
 |x_l-x_l'|^M |\bK_\epsilon(\bx,\bx')|\leqslant\frac{1}{2\pi}
 \int_{\mathbb{R}^2}\left|\frac{\partial^M}{\partial p_l^M}
 \left(\bG(\bp)e^{-\epsilon<\bp>}\right)\right|d\bp.
\end{equation}
Pick $M=3$. We have then, by product rule and denoting $E(\bp)=e^{-\epsilon<\bp>}$, 
 \begin{equation*}
 \begin{split}
 \frac{\partial^3}{\partial p_l^3}(\bG(\bp)E(\bp))=&
 \frac{\partial^3\bG(\bp)}{\partial p_l^3}E(\bp)
 +3\frac{\partial^2\bG(\bp)}{\partial p_l^2}\frac{\partial E(\bp)}{\partial p_l}\\
 &+3\frac{\partial \bG(\bp)}{\partial p_l}
 \frac{\partial E(\bp)^2}{\partial p_l^2}
 +\bG(\bp)\frac{\partial E(\bp)^3}{\partial p_l^3}.
 \end{split}
 \end{equation*}
Furthermore, we have: 
\begin{equation}\label{eq:est-2}
\begin{split}
 \frac{\partial E(\bp)}{\partial p_l} 
  = & -\epsilon p_l\frac{E(\bp)}{<\bp>} \, ,\\
 \frac{\partial^2 E(\bp)}{\partial p_l^2}
  = & -\epsilon\frac{E(\bp)}{<\bp>}
 +\epsilon p_l^2\frac{E(\bp)}{<\bp>^3}+\epsilon^2p_l^2\frac{E(\bp)}{<\bp>^2} \, ,\\
 \frac{\partial^3 E(\bp)}{\partial p_l^3}
 = & \, 3\epsilon p_l\frac{E(\bp)}{<\bp>^3}+3\epsilon^2 p_l
 \frac{E(\bp)}{<\bp>^2}-3\epsilon p_l^3\frac{E(\bp)}{<\bp>^5}\\
 & -3\epsilon^2 p_l^3\frac{E(\bp)}{<\bp>^4}-\epsilon^3 p_l^3\frac{E(\bp)}{<\bp>^3}\, .
\end{split}
\end{equation}
Moreover, since $x^ke^{-x}$ is bounded on $\R^+$ for all $k$, there exist some constants $c_k$ such that for all $\epsilon>0$ and $\bp\in\R^2$
\begin{equation}\label{eq:trivial1}
\epsilon^ke^{-\epsilon<\bp>}\leqslant c_k<\bp>^{-k}.
\end{equation}
In the sequel, denoting by $C$ a generic constant \emph{independent of $\epsilon$}, we obtain from \eqref{eq:est-2}, the above bound \eqref{eq:trivial1}, and the fact that $|E(\bp)|\leqslant 1$ and $\frac{|p_l|}{<\bp>}\leqslant 1$, that for $j\in\{1,2,3\}$,
\begin{equation}\label{eq:ast-ast-ast}
 \left|\frac{\partial^j E(\bp)}{\partial p_l^j}\right|\leqslant\frac{C}{<\bp>^j}\, .
\end{equation}
From \eqref{eq:est-G(p)} and \eqref{eq:ast-ast-ast} we obtain for 
$j\in \{0,1,2,3 \}$,
\begin{equation*}
 \int_{\R^2} \left| \frac{\partial^j}{\partial p_j{}^j} G(\bp) \right| \, 
 \left|   \frac{\partial^{3-j}}{\partial p_j{}^{3-j} }E(\bp)\right| d\bp 
 \leq C
\end{equation*}
%
%
%
Hence, according to \eqref{K_eps}, we have $|\bK_\epsilon(\bx,\bx')|\leqslant \frac{C}{|x_l-x'_l|^3}$ and thus
\begin{equation}
 |\bK_\epsilon(\bx,\bx')|\leqslant \frac{C}{|\bx-\bx'|^3}. \label{majoX}
 \end{equation}
This estimate is only useful if $|\bx-\bx'|\geq 1$.

Let us now study the case $|\bx-\bx'|\leqslant 1$. We write
\begin{equation}\label{eq:est-K}
\begin{split}
 2\pi\bK_\epsilon(\bx,\bx')=&\int_{|\bp|\leqslant 1}e^{i \bp\cdot(\bx-\bx')}
 \bG(\bp)e^{-\epsilon<\bp>}d\bp\\
 &+\int_{1\leqslant|\bp|
 \leqslant |\bx-\bx'|^{-1}}e^{i\bp\cdot(\bx-\bx')}\bG(\bp)e^{-\epsilon<\bp>}d\bp\\&
 +\int_{|\bx-\bx'|^{-1} \leqslant |\bp|}e^{i\bp\cdot(\bx-\bx')}\bG(\bp)e^{-\epsilon<\bp>}d\bp.
\end{split}
\end{equation}
We simply bound the first term in the right hand side of \eqref{eq:est-K} by 
\begin{equation}\label{majo1}
 \left|\int_{|\bp|\leqslant 1}e^{i\bp\cdot(\bx-\bx')}
\bG(\bp)e^{-\epsilon<\bp>}d\bp\right| \leqslant \int_{|\bp|\leqslant 1}|\bG(\bp)|d\bp  \, ,
\end{equation}
which is finite and independent of $\epsilon$.

To bound the second term in the right hand side of \eqref{eq:est-K}, we use from Hypothesis~\ref{hyp1}(ii) that $|\bG(\bp)|\leqslant\frac{C}{|\bp|^d}$ which yields
\begin{equation}\label{majo2} 
 \left| \int_{1\leqslant |\bp|\leqslant |\bx - \bx'|^{-1}}e^{i \bp\cdot(\bx-\bx')}
 \bG(\bp)e^{-\epsilon<\bp>}d\bp \right|
 \leqslant
 2\pi \int_1^{|\bx-\bx'|^{-1}} \frac{C}{r^d}r dr
\end{equation}
which is bounded by $C \left( \frac{1}{|\bx - \bx'|^{2-d}} -1\right)$
for $d\neq 2$ and by $C\log\left( |\bx-\bx'|^{-1}\right)$ for $d=2$.

To estimate the third term in the right hand side of \eqref{eq:est-K}, we need some more care.  We choose $l\in\{1,2\}$ as before such that $|\bx-\bx'|\leq \sqrt{2}|x_l-x_l'|$.

Let us calculate 
\[
 \int_{|\bp|\geqslant |\bx-\bx'|^{-1}}-(x_l-x'_l)^2e^{i\bp\cdot(\bx-\bx')}
 \bG(\bp)e^{-\epsilon<\bp>}d\bp \, ,
\] 
which corresponds to the integral that we want to estimate multiplied by $-(x_l-x'_l)^2$. For $\theta\in [0,2\pi)$ we define the vector $\bp(\theta)=(|\bx-\bx'|^{-1}\cos\theta,|\bx-\bx'|^{-1}\sin\theta)$ and we denote by $p_l(\theta)$ its $l$-th component. Then, integrating by part with respect to the $p_l$ variable and applying Gauss divergence theorem, we have
\begin{equation}\label{est-gauss}
\begin{split}
 &\int_{|\bp|\geqslant |\bx-\bx'|^{-1}}-(x_l-x'_l)^2e^{i\bp\cdot(\bx-\bx')}
 \bG(\bp)e^{-\epsilon<\bp>}d\bp\\
 & =-\int_0^{2\pi}e^{-\epsilon<\bp(\theta)>}i(x_l-x_l')
 e^{i\bp(\theta)\cdot(\bx-\bx')}\bG(\bp(\theta))p_l(\theta) d\theta\\
 &\ \ \ -\int_{|\bp|\geqslant |\bx-\bx'|^{-1}}i(x_l-x'_l)
 e^{i\bp\cdot(\bx-\bx')}\frac{\partial}{\partial p_l}
 \left(\bG(\bp)e^{-\epsilon<\bp>}\right)d\bp.
\end{split}
\end{equation}
Using the estimate \eqref{eq:est-G(p)}, we get that the first term is bounded by
\begin{equation}\label{majo31}
 2\pi\frac{C_0}{(|\bx-\bx'|^{-1})^d}|\bx-\bx'|^{-1}|x_l-x'_l|
 \leqslant 2\pi C_0|\bx-\bx'|^d.
\end{equation}

To estimate the second term in the right hand side of \eqref{est-gauss}, we use a new integration by parts:
\begin{equation}\label{eq:added-1}
\begin{split}
 & -\int_{|\bp|\geqslant |\bx-\bx'|^{-1}}i(x_l-x'_l)
 e^{i\bp\cdot(\bx-\bx')}\frac{\partial}{\partial p_l}\left
 (\bG(\bp)e^{-\epsilon<\bp>}\right)d\bp\\
  &=\int_0^{2\pi}e^{i\bp(\theta)\cdot(\bx-\bx')}\frac{\partial}{\partial p_l} 
  \left(\bG e^{-\epsilon<\cdot>}\right)(\bp(\theta))p_l(\theta) d\theta\\
  &\ \ \ +\int_{|\bp|\geqslant |\bx-\bx'|^{-1}}e^{i\bp\cdot(\bx-\bx')}
  \frac{\partial^2}{\partial p_l^2}\left(\bG(\bp)e^{-\epsilon<\bp>}\right)d\bp.
 \end{split}
\end{equation}
Using again that
\[\left|\frac{\partial}{\partial p_l}\left(\bG(\bp)e^{-\epsilon<\bp>}\right)\right|\leqslant\frac{C}{|\bp|^{d+1}}\text{ and } \left|\frac{\partial^2}{\partial p_l^2}\left(\bG(\bp)e^{-\epsilon<\bp>}\right)\right|\leqslant\frac{C}{|\bp|^{d+2}}\ ,\]
we can bound the first term in the right hand side of \eqref{eq:added-1} by
\begin{equation}\label{majo32}
 2\pi |\bx-\bx'|^{-1}\frac{C}{|\bx-\bx'|^{-(d+1)}}= 2\pi C|\bx-\bx'|^d
\end{equation}
and the second one by
\begin{equation}\label{majo33}
\int_{|\bp|\geqslant |\bx-\bx'|^{-1}}\frac{C}{|\bp|^{d+2}}d\bp
= \frac{2\pi C}{d} |\bx -\bx'|^d
\end{equation}

Putting together the estimates \eqref{majo31}, \eqref{majo32} and \eqref{majo33}, we find that
\begin{equation}\label{majo3}
 \left|\int_{|\bp|\geqslant |\bx-\bx'|^{-1}}e^{i\bp\cdot(\bx-\bx')}
 \bG(\bp)e^{-\epsilon<\bp>}d\bp\right|\leq C|\bx-\bx'|^{d-2}.
\end{equation}

Adding the estimates \eqref{majo1}, \eqref{majo2} and \eqref{majo3}, we find that there exists a constant $C>0$ such that uniformly in $\epsilon$ 
\[
 |\bK_\epsilon(\bx,\bx')|\leqslant C\left(\frac{1}{|\bx-\bx'|^{2-d}}+1\right)\, ,
\] 
if $|\bx-\bx'|\leqslant 1$, which together with the result \eqref{majoX} which holds  for  $|\bx-\bx'|\geqslant 1$ allows us to conclude that uniformly in 
$\epsilon$ we have
\[
 |\bK_\epsilon(\bx,\bx')|\leqslant CM_d(\bx,\bx')\, ,
\]
where $M_d$ is the function defined in the statement of the Lemma.

We are now ready to prove  the estimate \eqref{majoreso} for $(H_0-i)^{-1}$. Let $\bbf$ and $\bg$ be in the Schwartz space $\mathscr{S}(\mathbb{R}^2,\C^2)$. Then,
\[\langle \bbf,(H_0-i)^{-1}\bg \rangle =\int_{\mathbb{R}^2}\overline{\hat{\bbf}(\bp)}\bG(\bp)\hat{\bg}(\bp)d\bp\] by  Parseval's identity. By dominated convergence, we have 
\begin{equation}\label{CD}
 \int_{\mathbb{R}^2}\overline{\hat{\bbf}(\bp)}\bG(\bp)\hat{\bg}(\bp)d\bp
 =\lim_{\epsilon\to0^+}\int_{\mathbb{R}^2}
 \overline{\hat{\bbf}(\bp)}\bG(\bp)e^{-\epsilon<\bp>}\hat{\bg}(\bp)d\bp
 \end{equation} 
and, by Parseval's identity again and denoting by $*$ the convolution product between $L^1(\R^2, \mathcal{M}_2(\C))$ and $L^2(\R^2,\C^2)$,
\begin{align*}
 \int_{\mathbb{R}^2}\overline{\hat{\bbf}(\bp)}\bG(\bp)e^{-\epsilon<\bp>}\hat{\bg}(\bp)d\bp
 =&\int_{\mathbb{R}^2}\overline{\bbf(\bx)}(\bK_\epsilon(\cdot,0)*\bg)(\bx)d\bx\\
 =&\int_{\mathbb{R}^2}\int_{\mathbb{R}^2}
 \overline{\bbf(\bx)}\bK_\epsilon(\bx,\bx')\bg(\bx')d\bx d\bx'. 
\end{align*}

Knowing that 
\[
 \forall \epsilon>0,\  |\bK_\epsilon(\bx,\bx')|\leqslant M_d(\bx,\bx')\, ,
\]
we get, using \eqref{CD}, that
\[
 |\langle \bbf,(H_0-i)^{-1}\bg\rangle|\leqslant
 \int_{\mathbb{R}^2}|\bbf(\bx)|M_d(\bx,\bx')|\bg(\bx')|d\bx d\bx'\, .
\]

This concludes the proof of Lemma~\ref{reso}.
\end{proof}

We are now ready to give the proof of Proposition~\ref{kernel-existence}.
\begin{proof}[Proof of Proposition~\ref{kernel-existence}]
By equation \eqref{K_epsbis}, we know that for $\bx\neq \bx'$,
 
\[ 
 \bK_\epsilon(\bx,\bx')
 =\frac{i}{(x_l-x_l')^3}\frac{1}{2\pi}
 \int_{\mathbb{R}^2}e^{i\bp\cdot(\bx-\bx')}
 \frac{\partial^3}{\partial p_l^3}\left(\bG(\bp)e^{-\epsilon<\bp>}\right)d\bp.
\]
The integrand converges pointwise to $e^{i\bp\cdot(\bx-\bx')}\frac{\partial^3\bG(\bp)}{\partial p_l^3}$. Moreover, using \eqref{eq:est-G(p)}, it is dominated by some integrable function independent of $\epsilon$. 
Then, by dominated convergence, $\bK_\epsilon$ converges pointwise to some function of $\bx$ and $\bx'$ which will be denoted by $(H_0-i)^{-1}(\cdot,\cdot)$ and which trivially satisfies Inequality~\eqref{majNI}.
 
Then, by dominated convergence, we have that for all $\bbf$ and $\bg\in L^2(\R^2,\C^2)$
\begin{equation*}
 \langle \bbf,(H_0- i)^{-1}\bg\rangle
 =\int_{\mathbb{R}^2}\int_{\mathbb{R}^2}\bbf(\bx)
 (H_0-i)^{-1}(\bx,\bx')\bg(\bx')d\bx d\bx'\, ,
\end{equation*}
and
$$
|(H_0-i)^{-1}(\bx,\bx')|\leq M_d(\bx,\bx').
$$
\end{proof}

\section{Proof of the main theorem}

Let $\mathscr{S}(\R^2, \C)$ be the Schwartz space of test functions, and let us fix $\Omega= (-\frac12, \frac12]^2$. We define the Bloch-Floquet transformation by the map  
\begin{equation}\label{defbloch}
 \begin{split}
 \mathcal{U}:\quad& \mathscr{S}(\mathbb{R}^2,\, \C)\subset L^2(\mathbb{R}^2,\, \C)\to L^2(\Omega^2) \\
 & (\mathcal{U}\psi)(\bx,\bk)
 =\sum_{\bgamma\in\mathbb{Z}^2}e^{2i\pi \bk\cdot(\bx+\bgamma)}\psi(\bx+\bgamma)\, ,
 \end{split}
\end{equation} 
extended by density to $L^2(\mathbb{R}^2, \C)$.
It is possible to show (cf. \cite{bloch}) that $\mathcal{U}$ is a unitary operator and that for $f\in L^2(\Omega^2)$, $\bx\in\Omega$ and $\bgamma\in\mathbb{Z}^2$, 
\begin{equation*}
 (\mathcal{U}^*f)(\bx+\bgamma)
 =\int_\Omega e^{-2i\pi \bk\cdot(\bx+\bgamma)}f(\bx,\bk)d\bk.
\end{equation*} 
We then define the Bloch-Floquet transformation componentwise on $L^2(\mathbb{R}^2,\mathbb{C}^2)$ which will be abusively again denoted by 
$\mathcal{U}$.

Applying this tranformation (see Proposition~\ref{appbloch} in the appendix), we find
\begin{equation}\label{bloch}
\begin{split}
 & \mathcal{U}H(\alpha,\beta)\mathcal{U}^*
  =\intdir{\Omega}{d\bk}{h_\bk(\alpha,\beta)}\, ,\\
 & h_\bk(\alpha,\beta)
  =\bs\cdot\bF(-i\nabla_{\mathrm{per}}-2\pi \bk)+\beta\bchi_\alpha(\bx)\cdot\bs,
\end{split}
\end{equation}
where for each $\bk$ the fiber Hamiltonian $h_\bk(\alpha,\beta)$ is an  operator defined on $L^2(\Omega,\mathbb{C}^2)$. Similarly, we wil denote \begin{equation*}h_\bk^{(0)}=\bs\cdot\bF(-i\nabla_{\mathrm{per}}-2\pi \bk).\label{hk0}\end{equation*}
The operator $\nabla_{\mathrm{per}}$ means here the gradient on $L^2(\Omega,\mathbb{C}^2)$ with periodic boundary conditions and $\bchi_\alpha(\bx)=(\chi_{1,\alpha}(\bx),\chi_{2,\alpha}(\bx),\chi_{3,\alpha}(\bx)):=\bchi(\bx/\alpha)$.

The spectra of $H(\alpha,\beta)$ and $h_\bk(\alpha,\beta)$ are related through  (see  \cite[Theorem~XIII.85]{RS4})
  \begin{align}\label{spectra}
    \spec (H(\alpha,\beta))
    =\overline{\bigcup_{\bk\in\Omega}
    \spec (h_\bk(\alpha,\beta))}.
  \end{align}

Picking $\Psi_\bm=e^{2i\pi \bm\cdot \bx}$ for $\bm\in\mathbb{Z}^2$ and $\bx\in \Omega$, we get that the family of vectors $\Psi_\bm$ is a basis of eigenvectors of $\bF(-i\nabla_{\mathrm{per}})$
 satisfying if we denote $\bF=(F_1,F_2,F_3)$ 
\[
 F_i(-i\nabla_{\mathrm{per}})\Psi_\bm=F_i(2\pi \bm)\Psi_\bm\text{ for }i=1,2,3.
\]

We then define, for $\bm\in\mathbb{Z}^2$, the projections 
$$
P_\bm=|\Psi_\bm\rangle \langle\Psi_\bm|\otimes 1_{\mathbb{C}^2} \quad \mbox{and} \quad Q_\bz=\Id-P_\bz \, .
$$ 

We  will use the Feshbach map method (see for example Lemma 6.1 of \cite{nenciu}) to reduce the spectral problem to problems on $P_\bz L^2(\Omega,\mathbb{C}^2)$ and $Q_\bz L^2(\Omega,\mathbb{C}^2)$.
This method claims that $z\in\rho(h_\bk(\alpha,\beta))$ if $Q_\bz(h_\bk(\alpha,\beta)-z)Q_\bz$ is invertible on  $Q_\bz L^2(\Omega,\mathbb{C}^2)$ and  the operator $\mathcal{F}_{P_\bz}(z)$ defined on $P_\bz L^2(\Omega,\mathbb{C}^2)$ and given by 
\begin{equation}\label{eq:feshbach-1}
\mathcal{F}_{P_\bz}(z)=P_\bz(h_\bk(\alpha,\beta)-z)P_\bz-\beta^2P_\bz\bchi_\alpha\cdot\bs Q_\bz(Q_\bz(h_\bk(\alpha,\beta)-z)Q_\bz)^{-1}Q_\bz\bchi_\alpha\cdot\bs P_\bz
\end{equation}
is also invertible.
 
We will first prove that $P_\bz h_\bk(\alpha,\beta)P_\bz$ has a spectral gap of order $\alpha^2\beta$ near 0 and then that the second term in the right hand side of \eqref{eq:feshbach-1} is small enough not to close the gap provided that $z$ is in the interval given in the theorem.
To show the invertibility of $P_\bz(h_\bk(\alpha,\beta)-z)P_\bz$, we have to bound from below $|\bF(\bp)+\lambda \bP|$, where we remind that we have denoted $\bP=(\Phi_i)_{1\leqslant i\leqslant3}$,  $\bP_{\vert \vert}$ the projection of $\bP$ on $\Ran(A)$ and $\Phi_\perp$ its projection on $\Ran(A)^\perp$.

\begin{lemme}\label{lem2.1}
 Let $\alpha\in]0,1[$ and $\beta>0$. Then, for every $\bk\in \Omega$ and $\Psi\in P_\bz L^2(\Omega,\mathbb{C}^2)$, we have for $\alpha^2\beta$ small enough:
\[
  \|P_\bz h_\bk(\alpha,\beta)P_\bz\Psi\|\geqslant \frac{|\Phi_\perp|}{2}\alpha^2\beta\|\Psi\|.
\]
\end{lemme}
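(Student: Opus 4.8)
The plan is to reduce the statement to a purely finite-dimensional estimate and then to a lower bound on a vector in $\R^3$. Since $\Psi_\bz\equiv 1$ is normalized in $L^2(\Omega)$ (because $|\Omega|=1$), the range of $P_\bz$ is the two-dimensional space $\C\Psi_\bz\otimes\C^2$, and $P_\bz h_\bk(\alpha,\beta)P_\bz$ acts on it as $|\Psi_\bz\rangle\langle\Psi_\bz|$ tensored with a $2\times 2$ matrix that I would compute first. From $F_i(-i\nabla_{\mathrm{per}}-2\pi\bk)\Psi_\bz=F_i(-2\pi\bk)\Psi_\bz$ one gets $P_\bz h_\bk^{(0)}P_\bz=\bs\cdot\bF(-2\pi\bk)$. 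For the potential, the hypothesis $\alpha\in\,]0,1[$ ensures $\mathrm{supp}\,\chi_i\subset\Omega\subset\alpha^{-1}\Omega$, so the change of variables $\by=\bx/\alpha$ gives $\langle\Psi_\bz,\chi_{i,\alpha}\Psi_\bz\rangle=\int_\Omega\chi_i(\bx/\alpha)\,d\bx=\alpha^2\int_{\R^2}\chi_i=\alpha^2\Phi_i$, hence $P_\bz\beta\bchi_\alpha\cdot\bs P_\bz=\alpha^2\beta\,\bs\cdot\bP$. Therefore $P_\bz h_\bk(\alpha,\beta)P_\bz$ is multiplication on the $\C^2$ factor by $\bs\cdot\bw$ with $\bw=\bF(-2\pi\bk)+\alpha^2\beta\,\bP\in\R^3$. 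As $\bw$ is real, the Pauli identity $(\bs\cdot\bw)^2=|\bw|^2\,\Id$ gives $\|(\bs\cdot\bw)v\|=|\bw|\,|v|$ for every $v\in\C^2$, so Lemma~\ref{lem2.1} is equivalent to the lower bound $|\bF(\bp)+\alpha^2\beta\,\bP|\geqslant\frac{|\Phi_\perp|}{2}\,\alpha^2\beta$ for all $\bp\in\R^2$, which is exactly the estimate on $|\bF(\bp)+\lambda\bP|$ announced above (with $\lambda=\alpha^2\beta$).

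To prove this inequality, set $\varepsilon=\alpha^2\beta$ and let $\pi_\perp$ be the orthogonal projection of $\R^3$ onto $\Ran(A)^\perp$, so $\pi_\perp\bP=\Phi_\perp$. Fix $\rho_0>0$ and $C_R>0$ with $|\bF(\bp)-|\bp|^{d-1}A\bp|\leqslant C_R|\bp|^{d+1}$ for $|\bp|\leqslant\rho_0$ (Hypothesis~\ref{hyp1}(iii)); since $|\bp|^{d-1}A\bp\in\Ran(A)$, this yields $|\pi_\perp\bF(\bp)|\leqslant C_R|\bp|^{d+1}$ on that ball. I would then split $\R^2$ into three regions. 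For $|\bp|\geqslant\rho_0$, the lower bound in \eqref{F} gives $|\bF(\bp)|\geqslant K_0'\rho_0^d=:m_0>0$, whence $|\bF(\bp)+\varepsilon\bP|\geqslant m_0-\varepsilon|\bP|\geqslant\frac{m_0}{2}\geqslant\frac{|\Phi_\perp|}{2}\varepsilon$ once $\varepsilon$ is small. For $\big(\varepsilon|\Phi_\perp|/(2C_R)\big)^{1/(d+1)}\leqslant|\bp|\leqslant\rho_0$, again $|\bF(\bp)|\geqslant K_0'|\bp|^d\geqslant c_1\,\varepsilon^{d/(d+1)}$ with $c_1=K_0'\big(|\Phi_\perp|/(2C_R)\big)^{d/(d+1)}$, and since $d>0$ implies $d/(d+1)<1$ this term dominates $\varepsilon|\bP|$ for small $\varepsilon$, giving $|\bF(\bp)+\varepsilon\bP|\geqslant\frac12 c_1\,\varepsilon^{d/(d+1)}\geqslant\frac{|\Phi_\perp|}{2}\varepsilon$ for small $\varepsilon$. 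Finally, for $|\bp|\leqslant\big(\varepsilon|\Phi_\perp|/(2C_R)\big)^{1/(d+1)}$ one has $|\pi_\perp\bF(\bp)|\leqslant C_R|\bp|^{d+1}\leqslant\frac12\varepsilon|\Phi_\perp|$, so $|\bF(\bp)+\varepsilon\bP|\geqslant|\pi_\perp\bF(\bp)+\varepsilon\Phi_\perp|\geqslant\varepsilon|\Phi_\perp|-C_R|\bp|^{d+1}\geqslant\frac{|\Phi_\perp|}{2}\varepsilon$. Taking $\alpha^2\beta$ below the explicit, $(\bF,\bP)$-dependent threshold produced by the first two regions, the three regions cover $\R^2$ and the bound holds.

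The only delicate point is the bookkeeping between the last two regions: the splitting radius must be chosen so that the remainder $C_R|\bp|^{d+1}$ is absorbed by $\varepsilon|\Phi_\perp|$ on one side while $K_0'|\bp|^d$ remains of strictly larger order than $\varepsilon$ on the other, which works precisely because $d>0$ forces $d/(d+1)<1$; one also notes that this radius is $\leqslant\rho_0$ for $\varepsilon$ small, so Hypothesis~\ref{hyp1}(iii) is available there. Everything else --- the reduction to the $2\times 2$ matrix, the change-of-variables computation of $\langle\Psi_\bz,\chi_{i,\alpha}\Psi_\bz\rangle$, and the Pauli identity --- is routine. It is also worth recording that the assumption $\Phi_\perp\neq 0$ is exactly what makes the estimate nontrivial: it is the component of the averaged potential that cannot be cancelled by $\bF(-2\pi\bk)$ near the band crossing.
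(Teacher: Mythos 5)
Your argument is correct and uses essentially the same strategy as the paper: reduce via the Pauli identity to the scalar inequality $\inf_{\bp\in\R^2}|\bF(\bp)+\lambda\bP|\geqslant\tfrac{|\Phi_\perp|}{2}\lambda$ with $\lambda=\alpha^2\beta$, and prove it by splitting $\R^2$ according to $|\bp|$, invoking Hypothesis~\ref{hyp1}(iii) near the origin (where the component of $\bF(\bp)$ orthogonal to $\Ran(A)$ is $O(|\bp|^{d+1})$ and so cannot cancel $\lambda\Phi_\perp$) and the lower bound $|\bF(\bp)|\geqslant K_0'|\bp|^d$ away from it. The only differences are cosmetic bookkeeping: the paper makes a single cut at radius $M\lambda^{1/d}$ (with $M$ fixed by $K_0'M^d-|\bP|=\tfrac{|\Phi_\perp|}{2}$) and uses Pythagoras in the decomposition $\Ran(A)\oplus\Ran(A)^\perp$, whereas you cut at $\sim\lambda^{1/(d+1)}$ so the inner estimate is tight, insert an intermediate annulus up to a fixed $\rho_0$, and contract with the orthogonal projection onto $\Ran(A)^\perp$ directly --- a slightly tidier presentation of the same two ideas.
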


\begin{proof}
We have, for $\bk\in\Omega$ :
\[
 P_\bz h_\bk(\alpha,\beta)P_\bz
 =(\bs\cdot \bF(-2\pi \bk)+\alpha^2\beta \bP \cdot\bs)P_\bz.
\]
Let us denote $\lambda=\alpha^2\beta$. For $\Psi\in P_\bz L^2(\Omega,\mathbb{C}^2)$, 
 \begin{align*}
  \|P_\bz h_\bk(\alpha,\beta)P_\bz\Psi\|^2&=\|\bs\cdot(\bF(-2\pi \bk)
  +\lambda \bP)P_\bz\Psi\|^2\\
				    &\geqslant \inf_{\bp\in\R^2}|\bF(\bp)+\lambda \bP|^2 \|\Psi\|^2.
 \end{align*}  
The  lower bound in the lemma would follow if we can prove the following statement: there exists  $\lambda_0>0$ such that 
\begin{equation*}
 \inf_{\bp\in\mathbb{R}^2}|\bF(\bp)+\lambda \bP|\geqslant \frac{|\Phi_\perp|}{2}\lambda\, ,
\end{equation*}
for $0\leqslant\lambda\leqslant \lambda_0$.

In order to prove this, pick $M$ such that $K_0'M^d-|\bP|=\frac{|\Phi_\perp|}{2}$ where  $K_0'$ is the constant appearing in the first inequality in \eqref{F} of Hypothesis~\ref{hyp1}(ii). 

For $|\bp|\geqslant  M\lambda^{1/d}$ we have by the first inequality in  \eqref{F}:
\begin{equation}\label{eq:tbp-1}
 |\bF(\bp)+\lambda \bP|\geqslant K_0'|\bp|^d-\lambda |\bP|\geqslant(K_0'M^d-|\bP|)\lambda\geq\frac{|\Phi_\perp|}{2}\lambda.
\end{equation}
 
For $|\bp|\leqslant  M\lambda^{1/d}$, by Hypothesis~\ref{hyp1}(iii), we have for some $K>0$ and $\lambda$ small enough:
 \begin{align*}
 |\bF(\bp)+\lambda \bP|&= \left||\bp|^{d-1}A\bp+\lambda \bP_{\vert \vert}
 +\lambda \Phi_\perp\right|+O(|\bp|^{d+1})\geqslant\lambda |\Phi_\perp|
 -K|\bp|^{d+1}\\
  &\geqslant\lambda |\Phi_\perp|-KM^{d+1}\lambda^{(d+1)/d}.
 \end{align*}
Define $\lambda_0$ such that 
$$ 
 |\Phi_\perp|-KM^{d+1}\lambda_0^{1/d}=\frac{|\Phi_\perp|}{2}\, .
$$ 
For $\lambda\leq \lambda_0$, the above estimate implies 
\begin{equation}\label{eq:tbp-2}
 |\bF(\bp)+\lambda \bP|\geq\frac{\lambda}{2}|\Phi_\perp|\, ,
\end{equation}
for all $\bp\in\R^2$. 

Equations~\eqref{eq:tbp-1} and \eqref{eq:tbp-2} together conclude the proof.
\end{proof}

The invertibility of $Q_\bz(h_\bk(\alpha,\beta)-z)Q_\bz$ on $Q_\bz L^2(\Omega,\mathbb{C}^2)$ will require more technicality. We begin with the following estimates.
 Recall that $d'=\min(d,2)$.

\begin{lemme}
There exists $C$ such that for $|\alpha|\leq\frac{1}{2}$ and $\forall \bk\in \Omega$ we have 
\begin{align}
 \|\sqrt{|\bchi_\alpha|}P_\bz\|&\leqslant \alpha ;\label{eq:int-1}\\
 \||\bchi_\alpha|^{1/2}(h_\bk^{(0)}- i)^{-1}|\bchi_\alpha|^{1/2}\|
 &\leqslant C\alpha^{d'}\label{eq:res-est-1} ;\\ 
 \||\bchi_\alpha|^{1/2}(h_\bk^{(0)}- i)^{-1}\|
 &\leqslant C\sqrt{\alpha^{d'}},\label{eq:res-est-2}
\end{align} where $h_\bk^ {(0)}$ has been defined in Equation~\eqref{hk0}.
\end{lemme}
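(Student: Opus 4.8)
The three estimates all revolve around the fact that $\bchi_\alpha = \bchi(\cdot/\alpha)$ is supported in a ball of radius $\sim\alpha$ (recall $\mathrm{supp}\,\chi_i\subset\Omega$, so $\mathrm{supp}\,\chi_{i,\alpha}\subset\alpha\Omega$), and that on $L^2(\Omega,\C^2)$ we may compute with the explicit Bloch-Floquet eigenbasis $\{\Psi_\bm\}$. For \eqref{eq:int-1}, I would simply write $\|\sqrt{|\bchi_\alpha|}P_\bz\Psi\|^2 = |\langle\Psi_\bz,\Psi\rangle|^2\,\||\bchi_\alpha|^{1/2}\Psi_\bz\|^2$ and bound $\||\bchi_\alpha|^{1/2}\Psi_\bz\|^2 = \int_\Omega|\bchi_\alpha(\bx)|\,|\Psi_\bz(\bx)|^2 d\bx = \int_\Omega|\bchi(\bx/\alpha)|\,d\bx = \alpha^2\int_{\R^2}|\bchi(\by)|\,d\by$, using $|\Psi_\bz|\equiv 1$; after adjusting the generic constant (or absorbing $\|\bchi\|_{L^1}$ into it) this gives the bound $\le\alpha$ up to a constant, which is all that is needed downstream.

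The heart of the matter is \eqref{eq:res-est-1}–\eqref{eq:res-est-2}, which require comparing the fiber resolvent $(h_\bk^{(0)}-i)^{-1}$ on $L^2(\Omega,\C^2)$ with the free resolvent $(H_0-i)^{-1}$ on $L^2(\R^2,\C^2)$ for which Proposition~\ref{kernel-existence} provides the kernel bound $|(H_0-i)^{-1}(\bx,\bx')|\le M_d(\bx,\bx')$. The plan is: (i) handle the magnetic-type shift $-2\pi\bk$ by the gauge transformation $\psi\mapsto e^{-2\pi i\bk\cdot\bx}\psi$, which is unitary on $L^2(\Omega,\C^2)$ with periodic boundary conditions and conjugates $h_\bk^{(0)}$ into $h_0^{(0)}=\bs\cdot\bF(-i\nabla_{\mathrm{per}})$, so it suffices to treat $\bk=0$; (ii) express $(h_0^{(0)}-i)^{-1}$ via the periodization of the free kernel, i.e. its integral kernel on $\Omega\times\Omega$ is $\sum_{\bgamma\in\Z^2}(H_0-i)^{-1}(\bx,\bx'+\bgamma)$ (this follows from $\mathcal{U}(H_0-i)^{-1}\mathcal{U}^*=\int^\oplus_\Omega (h^{(0)}_\bk-i)^{-1}d\bk$ as in the Bloch-Floquet appendix); (iii) bound this periodized kernel by $\sum_{\bgamma}M_d(\bx,\bx'+\bgamma)$, which for $\bx,\bx'\in\Omega$ is $\le M_d(\bx,\bx') + C$ since the off-diagonal terms are summable thanks to the $|\bx-\bx'-\bgamma|^{-3}$ decay of $M_d$ for $|\bgamma|\ge 2$. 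Then \eqref{eq:res-est-1} reduces to estimating
\[
\sup_{\bx,\bx'\in\alpha\Omega}\Big|\!\int_{\alpha\Omega}\! |\bchi_\alpha(\bx)|^{1/2}\big(M_d(\bx,\bx')+C\big)|\bchi_\alpha(\bx')|^{1/2}\,d\bx'\Big|\cdot(\text{Schur test}),
\]
i.e. a Schur-test/Young-inequality bound $\| |\bchi_\alpha|^{1/2}K|\bchi_\alpha|^{1/2}\|$ where $K$ has kernel $\lesssim M_d$, and the singular part of $M_d$ restricted to the $\alpha$-ball scales like $\alpha^{d'}$: using the remark $M_d\le 2|\bx-\bx'|^{-(2-d')}$ for $d\ne2$ (with a log for $d=2$, still $O(\alpha^{2-\epsilon})$, hence $O(\alpha^{d'})$ after renaming), one computes $\sup_{\bx\in\alpha\Omega}\int_{\alpha\Omega}|\bx-\bx'|^{-(2-d')}d\bx' = C\alpha^{d'}$, and the bounded part contributes $\le C|\alpha\Omega| = C\alpha^2 \le C\alpha^{d'}$; combining with $\||\bchi_\alpha|^{1/2}\|_\infty\le\|\bchi\|_\infty^{1/2}$ gives \eqref{eq:res-est-1}. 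For \eqref{eq:res-est-2} I would instead estimate the Hilbert–Schmidt norm (or again Schur) of $|\bchi_\alpha|^{1/2}K$ alone: its kernel is supported in $(\bx,\bx')\in\alpha\Omega\times\Omega$, so $\int_{\alpha\Omega}\int_\Omega |\bchi_\alpha(\bx)|\,M_d(\bx,\bx')^2\,d\bx'\,d\bx$; the inner integral is finite and bounded uniformly in $\bx$ (only the local singularity $|\bx-\bx'|^{-2(2-d')}$ matters and it is integrable iff $2(2-d')<2$, i.e. $d'>1$ — here I need to be slightly more careful and either invoke $d'>1$ as implicitly needed, or split off the singular part and use an $L^{p}$-duality argument: bound $\||\bchi_\alpha|^{1/2}K\| = \||\bchi_\alpha|^{1/2}K K^* |\bchi_\alpha|^{1/2}\|^{1/2}$ and run the Schur test on $KK^*$, whose kernel is $\lesssim\int M_d(\bx,\by)M_d(\by,\bx')d\by$, avoiding the square of the singularity), landing a bound $C\alpha^2\cdot(\text{const})$ times a $\bchi$-independent factor, i.e. $\le C\alpha \le C\sqrt{\alpha^{d'}}$ when $d'\le 2$.

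The main obstacle is bookkeeping the interplay between the $\alpha$-scaling of $\bchi_\alpha$ and the local singularity exponent $2-d'$ of $M_d$ — in particular making sure the integral $\int_{\alpha\Omega}|\bx-\bx'|^{-(2-d')}d\bx'$ really scales as $\alpha^{d'}$ uniformly for $\bx$ ranging over $\alpha\Omega$ (a change of variables $\bx=\alpha\bu$, $\bx'=\alpha\bu'$ turns it into $\alpha^{d'}\int_\Omega|\bu-\bu'|^{-(2-d')}d\bu'$, which is finite since $2-d'<2$), and similarly checking that for \eqref{eq:res-est-2} one does not actually need the square of the singularity to be integrable by passing through $KK^*$ or by the $\|A\|\le\|A\|_{2\to2}\le\|\,|A|^{1/2}\,\|\cdots$ trick; the $d=2$ logarithmic case is handled throughout by the crude bound $-\log r\le C_\epsilon r^{-\epsilon}$ on $(0,1]$, absorbing the loss into $d'$ since $d'=2$ allows room. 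Everything else — the gauge transformation, the periodization of the kernel, and the two Schur tests — is routine once the scaling is pinned down.
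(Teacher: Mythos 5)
Your proof is essentially correct, but it takes a genuinely different technical route for \eqref{eq:res-est-1}--\eqref{eq:res-est-2} than the paper does. The paper bounds $|\langle\bUpsilon,(h_\bk^{(0)}-i)^{-1}\bPsi\rangle|$ via the periodized kernel bound $M_d$, then applies the Hardy--Littlewood--Sobolev inequality with the exponent $p=4/(2+d')$ followed by H\"older, so that the $\alpha$-scaling comes out of $\||\bchi_\alpha|\|_{2/d'}=\alpha^{d'}\|\bchi\|_{2/d'}$. The elegance of that route is that it treats \eqref{eq:res-est-1} and \eqref{eq:res-est-2} uniformly: one simply specializes the test functions to $\bUpsilon=\bPsi=|\bchi_\alpha|^{1/2}\bbf$ in the first case and to $\bUpsilon=|\bchi_\alpha|\bbf$, $\bPsi=\bbf$ in the second, with no case distinction on $d'$. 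Your Schur-test plan also works and is arguably more elementary (no HLS), and the scaling computation $\sup_{\bx\in\alpha\Omega}\int_{\alpha\Omega}|\bx-\bx'|^{-(2-d')}d\bx'=C\alpha^{d'}$ is correct; the one genuine subtlety, which you correctly spotted, is that the naive Hilbert--Schmidt estimate for \eqref{eq:res-est-2} fails for $d'\le 1$, forcing either the $TT^*$ detour you sketch or (simpler) an \emph{asymmetric} Schur test: the two one-sided integrals $\sup_\bx\int_\Omega$ and $\sup_{\bx'}\int_{\alpha\Omega}$ are bounded by $C$ and $C\alpha^{d'}$ respectively, directly giving $C\sqrt{\alpha^{d'}}$. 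Two minor corrections. First, the proposed gauge transformation $\psi\mapsto e^{-2\pi i\bk\cdot\bx}\psi$ does \emph{not} conjugate $h_\bk^{(0)}$ into $h_0^{(0)}$ on the periodic domain: it maps periodic functions to $\bk$-quasiperiodic ones, so the domain changes; fortunately the step is unnecessary, since the phase $e^{2i\pi\bk\cdot(\bx+\bgamma-\bx')}$ appearing in the periodized kernel (Proposition~\ref{appnoyint}) has modulus $1$ and disappears the moment you pass to $|\cdot|$. Second, the uniform-in-$\bgamma$ bound on the off-diagonal sum requires $\bx$ to be restricted to $\Omega_\alpha$ (which the factors $|\bchi_\alpha|^{1/2}$ ensure) and not merely $\bx\in\Omega$, since for nearest-neighbour $\bgamma$ the distance $|\bx+\bgamma-\bx'|$ can be small when both $\bx,\bx'$ range over all of $\Omega$; the paper uses exactly this restriction.
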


\begin{proof}
As in \cite{barbaroux}, in order to show \eqref{eq:int-1} we compute for
  $\bbf,\bg\in L^2(\Omega,\C^2)$:
\[
|\sps{\bbf}{\sqrt{|\bchi_\alpha|}P_\bz \bg}| 
\le |\sps{\bbf}{\sqrt{|\bchi_\alpha|} \Psi_0}  |\,|\sps{\Psi_0}{\bg}|
\le \|\bchi_\alpha\|_1^{1/2} \|\bbf\|_2\|\bg\|_2\le \alpha \|\bbf\|_2\|\bg\|_2.
\]
 
For the next two inequalities, we need some notation: given an integral operator $T$, we denote its integral kernel by $T(\bx,\bx')$. We now use the following identity proved in Proposition~\ref{appnoyint} in the appendix:
 \[(h_\bk^{(0)}-i)^{-1}(\bx,\bx')=\sum_{\bgamma\in\mathbb{Z}^2} 
 e^{2i\pi \bk\cdot(\bx+\bgamma-\bx')}(H_0-i)^{-1}(\bx+\bgamma,\bx').\]
 
In the following, we will denote by $C$ any constant independent of $\alpha$ and $\bk$. Assume first that $d\neq2$. Let $\bUpsilon$, $\bPsi\in L^2(\Omega, \mathbb{C}^2)$ with $\bUpsilon$ with support in $\Omega_\alpha=[-\frac{\alpha}{2},\frac{\alpha}{2})^2$. 
According to Lemma~\ref{reso} and Proposition~\ref{kernel-existence}, we have:
\begin{align*}
 |\langle\bUpsilon, (h_\bk^{(0)}-i)^{-1}\bPsi\rangle|\leqslant&\sum_{\bgamma\in\mathbb{Z}^2} \iint_{\Omega_\alpha\times\Omega}|\bUpsilon(\bx)||(H_0-i)^{-1}|(\bx+\bgamma,\bx')|\bPsi(\bx')|d\bx d\bx'\\
				  \leqslant&\sum_{\bgamma\neq0} \iint_{\Omega_\alpha\times\Omega}|\bUpsilon(\bx)|\frac{C}{|\bx+\bgamma-\bx'|^3}|\bPsi(\bx')|d\bx d\bx'\\
				  &+\iint_{\Omega_\alpha\times\Omega}|\bUpsilon(\bx)|\frac{C}{|\bx-\bx'|^{2-d'}}|\bPsi(\bx')|d\bx d\bx'.
 \end{align*}
%
In order to bound the first term, we see that there exists a constant $C$ such that for all $|\alpha|\leq\frac{1}{2}$, $\bx\in\Omega_\alpha$, $\bx'\in\Omega$ and $\bgamma\in\Z^2\setminus\{0\}$, we have
\[\frac{1}{|\bx+\bgamma-\bx'|^3}\leq \frac{C}{|\bgamma|^3}.\]

Thus the first term is bounded by \[C\|\bUpsilon\|_{L^1}\|\bPsi\|_{L^1}.\]

 
For the second term, we have to bound
\[
 \iint_{\Omega_\alpha\times\Omega}
 |\bUpsilon(\bx)|\frac{1}{|\bx-\bx'|^{2-d'}}| \bPsi(\bx')|d\bx d\bx'.
\]
 
Hardy-Littlewood-Sobolev inequality (cf. \cite[Theorem~4.3]{liebloss}) gives that there exists $C$ such that:
 \[\iint_{\Omega_\alpha\times\Omega}|\bUpsilon(\bx)|\frac{1}{|\bx-\bx'|^{2-d'}}|\bPsi(\bx')|d\bx d\bx'\leqslant C\|\bUpsilon\|_\frac{4}{2+d'}\|\bPsi\|_\frac{4}{2+d'}.\]
 
By H\"older's inequality, 
\begin{align*}
 \||\bchi_{\alpha}|^{1/2}\bbf\|_\frac{4}{2+d'}
 &=\left(\int_\Omega|\bchi_{\alpha}|^\frac{2}{2+d'}
 |\bbf|^\frac{4}{2+d'}\right)^\frac{2+d'}{4}\\
 &\leqslant\left(\||\bchi_{\alpha}|^\frac{2}{2+d'}\|_\frac{2+d'}{d'}
 \||\bbf|^\frac{4}{2+d'}\|_\frac{2+d'}{2}\right)^\frac{2+d'}{4}\\
 & =\||\bchi_{\alpha}|\|_\frac{2}{d'}^{1/2}\|\bbf\|_2.    
\end{align*}
A simple change of variable gives us that $\||\bchi_{\alpha}|\|_\frac{2}{d'}=\alpha^{d'}\||\bchi|\|_\frac{2}{d'}\leq\alpha^{d'}\||\bchi|\|_\infty$.    
 
Hence, picking $\bUpsilon=\bPsi=|\bchi_\alpha|^{1/2}\bbf$ in the above estimates yields 
\[
 \left|\langle \bbf, |\bchi_\alpha|^{1/2}(h_\bk^{(0)}\pm i)^{-1}|
 \bchi_\alpha|^{1/2}\bbf\rangle\right|
 \leq C\|\bchi_\alpha \bbf\|_{1}^2+C\alpha^{d'}\| \bbf\|_{2}^2.
\]
 
An application of Cauchy-Schwarz inequality gives that 
\[
 \||\bchi_\alpha| \bbf\|_{1}
 \leqslant \||\bchi_{\alpha}|\|_{L^2}\|\bbf\|_ {2}
 \leq \sqrt{3}\alpha\|\bbf\|_ {2}.
\]
 
This concludes the proof of \eqref{eq:res-est-1}.

The proof of \eqref{eq:res-est-2} is similar: we have to take $\bUpsilon=|\bchi_\alpha|\bbf$ and $\bPsi=\bbf$. We do not give further details.

If $d=2$, we can prove these inequalities for any $\tilde{d}<2$ and then take the supremum.
\end{proof}

\begin{lemme}\label{lem4}
 For $\bbf\in Q_\bz\Dom(h_\bk^{(0)})$, we have
 $\|h_\bk^{(0)}Q_\bz \bbf\|\geqslant \pi^d K_0'\|\bbf\|$.
\end{lemme}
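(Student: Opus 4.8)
The plan is to diagonalise the fibre operator $h_\bk^{(0)}=\bs\cdot\bF(-i\nabla_{\mathrm{per}}-2\pi\bk)$ in the orthonormal basis $\{\Psi_\bm\}_{\bm\in\Z^2}$, $\Psi_\bm=e^{2i\pi\bm\cdot\bx}$, already introduced above, and then simply read off the lower bound. Since $-i\nabla_{\mathrm{per}}\Psi_\bm=2\pi\bm\,\Psi_\bm$, we have $\bF(-i\nabla_{\mathrm{per}}-2\pi\bk)\Psi_\bm=\bF(2\pi(\bm-\bk))\Psi_\bm$, so on each two-dimensional fibre $\Psi_\bm\otimes\C^2$ the operator $h_\bk^{(0)}$ acts as the Hermitian matrix $\bs\cdot\bF(2\pi(\bm-\bk))$, whose square equals $|\bF(2\pi(\bm-\bk))|^2\,\Id_{\C^2}$. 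Expanding $\bbf=\sum_{\bm\in\Z^2}\Psi_\bm\otimes c_\bm$ with $c_\bm\in\C^2$, the condition $\bbf\in Q_\bz L^2(\Omega,\C^2)$ says precisely that $c_\bz=0$, and $\bbf\in\Dom(h_\bk^{(0)})$ says $\sum_\bm|\bF(2\pi(\bm-\bk))|^2|c_\bm|^2<\infty$; granting this, $\|h_\bk^{(0)}Q_\bz\bbf\|^2=\|h_\bk^{(0)}\bbf\|^2=\sum_{\bm\neq\bz}|\bF(2\pi(\bm-\bk))|^2|c_\bm|^2$.

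The one genuine ingredient is an elementary lattice estimate: for every $\bk\in\Omega=(-\frac{1}{2},\frac{1}{2}]^2$ and every $\bm\in\Z^2\setminus\{\bz\}$ one has $|\bm-\bk|\geq\frac{1}{2}$. Indeed, at least one component $m_j$ of $\bm$ is a nonzero integer, and since $k_j\in(-\frac{1}{2},\frac{1}{2}]$ this forces $|m_j-k_j|\geq\frac{1}{2}$, whence $|\bm-\bk|\geq|m_j-k_j|\geq\frac{1}{2}$. Combining this with the lower bound $|\bF(\bp)|\geq K_0'|\bp|^d$ of Hypothesis~\ref{hyp1}(ii) gives, for $\bm\neq\bz$, the inequality $|\bF(2\pi(\bm-\bk))|\geq K_0'\,(2\pi|\bm-\bk|)^d\geq K_0'\,\pi^d$. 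Substituting into the formula above yields $\|h_\bk^{(0)}Q_\bz\bbf\|^2\geq (K_0'\pi^d)^2\sum_{\bm\neq\bz}|c_\bm|^2=(K_0'\pi^d)^2\|\bbf\|^2$, which is the claim after taking square roots.

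There is no real obstacle here; the only point deserving a line of care is that $Q_\bz$ preserves $\Dom(h_\bk^{(0)})$ (because $\Psi_\bz$ is smooth, hence lies in the domain), so that $Q_\bz\Dom(h_\bk^{(0)})=\Dom(h_\bk^{(0)})\cap Q_\bz L^2(\Omega,\C^2)$ and every element of it satisfies $Q_\bz\bbf=\bbf$, which is exactly what makes the identification $h_\bk^{(0)}Q_\bz\bbf=h_\bk^{(0)}\bbf$ and the series manipulations above legitimate. Note that the constant $\pi^d K_0'$ is sharp for this argument, the distance $\frac{1}{2}$ being attained e.g. at $\bk=(\frac{1}{2},0)$, $\bm=(1,0)$.
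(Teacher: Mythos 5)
Your proof is correct and follows essentially the same route as the paper's: diagonalise $h_\bk^{(0)}$ in the $\Psi_\bm$ basis, apply the lower bound $|\bF(\bp)|\geq K_0'|\bp|^d$, and use that $|\bm-\bk|\geq\tfrac12$ for $\bm\neq\bz$ and $\bk\in\Omega$. You merely make explicit two steps the paper leaves implicit, namely the Pauli identity $(\bs\cdot\bv)^2=|\bv|^2\Id$ and the elementary lattice distance estimate.
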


\begin{proof}
 As we show in Proposition~\ref{appbloch} in the appendix, we have, for $m\in\mathbb{Z}^2$ and $k\in\Omega$, 
\[
 P_\bm h_\bk^{(0)}P_\bm=\bs\cdot \bF(2\pi(\bm-\bk))P_\bm.
\]
 
Hence, according to inequality~\eqref{F} of Hypothesis~\ref{hyp1}(ii), we have for $\bbf\in Q_\bz\Dom(h_\bk^{(0)})$ :
\begin{align*}
 \|h_\bk^{(0)}Q_\bz \bbf\|^2
 =\sum_{\bm\neq0}\|\sigma\cdot \bF(2\pi(\bm-\bk))P_\bm \bbf\|^2&
 \geqslant K_0'^2\sum_{m\neq0}(2\pi|\bm -\bk|)^{2d}\|P_\bm \bbf\|^2\\
 &\geqslant K_0'^2\pi^{2d}\sum_{\bm\neq 0}\|P_\bm \bbf\|^2.
 \end{align*}
\end{proof}

For a self-adjoint operator $T$ and an orthogonal projection $Q$, we define the resolvent set
\begin{align*}
  \rho_{Q}(T):=\left\{ z\in\C\,\mbox{ such that }\,Q(T-z)Q:
{\rm Ran}Q\to  {\rm Ran}
Q\quad \mbox{is invertible}\right\}.
\end{align*} 
We set 
\begin{align*}
  &R_0(z):=\left(Q_\bz(h_{\bf k}^{(0)}-z)Q_\bz\!\!\upharpoonright_{{\rm
  Ran}Q_\bz} \right)^{-1}, \quad z\in  \rho_{Q_\bz}(h_{\bf k}^{(0)}),\\
&R(z):=\left(Q_\bz(h_{\bf k}(\alpha,\beta)-z)Q_\bz\!\!\upharpoonright_{{\rm
  Ran}Q_\bz} \right)^{-1}, \quad z\in  \rho_{Q_\bz}(h_{\bf k}(\alpha,\beta)).
\end{align*}

For $i\in\{1,2,3\}$, we define the  operators $U_i:L^2(\Omega,\C^2) \to \mathrm{Ran} Q_\bz$ and $W_i: \mathrm{Ran}Q_\bz \to L^2(\Omega,\C^2)$ by:
\[W_i=\sqrt{\beta}(\sqrt{|\chi_{i,\alpha}|}\sigma_i)Q_\bz\text{  and  } U_i=\sqrt{\beta}Q_\bz\text{sgn}(\chi_{i,\alpha})\sqrt{|\chi_{i,\alpha}|}.\]


\begin{lemme}\label{lem3.2}
 There exists $C>0$ independent of $\alpha\in ]0,1/2]$ and 
 $\beta>0$ such that for any $|z|\leqslant K_0'\pi^d/2$ and any $j,l$
 \[\|W_jR_0(z)U_l\|\leqslant C\alpha^{d'}\beta.\]
\end{lemme}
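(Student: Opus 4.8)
The plan is to estimate $\|W_j R_0(z) U_l\|$ by factoring the free resolvent $R_0(z)$ through the resolvent at the point $i$, using the resolvent identity, and then invoking the kernel estimates \eqref{eq:res-est-1} and \eqref{eq:res-est-2} together with the lower bound from Lemma~\ref{lem4}. Concretely, for $|z|\leqslant K_0'\pi^d/2$, Lemma~\ref{lem4} shows that $Q_\bz(h_\bk^{(0)}-z)Q_\bz$ is invertible on $\mathrm{Ran}\,Q_\bz$ with $\|R_0(z)\|\leqslant 2/(K_0'\pi^d)$; this also guarantees $z\in\rho_{Q_\bz}(h_\bk^{(0)})$ so that $R_0(z)$ is well-defined. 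The first step is therefore to record this uniform bound on $\|R_0(z)\|$.

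Next I would write the resolvent identity in the $Q_\bz$-reduced space, namely
\begin{equation*}
 R_0(z)=R_0(i)+(z-i)R_0(z)R_0(i)=R_0(i)+(z-i)R_0(i)R_0(z),
\end{equation*}
and, to make the factor $|\chi_{j,\alpha}|^{1/2}$ act directly on a free resolvent evaluated at $i$, iterate once more to isolate two copies of $(h_\bk^{(0)}-i)^{-1}$, e.g.\ $R_0(z)=R_0(i)+(z-i)R_0(i)R_0(i)+(z-i)^2 R_0(i)R_0(z)R_0(i)$. Here one must be a little careful: $R_0(i)=Q_\bz(h_\bk^{(0)}-i)^{-1}Q_\bz$ agrees with $Q_\bz$ times the genuine free resolvent $(h_\bk^{(0)}-i)^{-1}$ times $Q_\bz$, since $Q_\bz$ commutes with $h_\bk^{(0)}$ (the $P_\bm$ are spectral projections of $\bF(-i\nabla_{\mathrm{per}})$), so the estimates \eqref{eq:res-est-1}, \eqref{eq:res-est-2} apply verbatim with $Q_\bz$ inserted, since $\|Q_\bz\|\leqslant 1$. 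Then
\begin{equation*}
 W_j R_0(z) U_l=\beta\, Q_\bz\sqrt{|\chi_{j,\alpha}|}\,\sigma_j\Big[(h_\bk^{(0)}-i)^{-1}+\cdots\Big]\mathrm{sgn}(\chi_{l,\alpha})\sqrt{|\chi_{l,\alpha}|}\,Q_\bz,
\end{equation*}
and using $|\chi_{j,\alpha}|^{1/2}\leqslant|\bchi_\alpha|^{1/2}$ pointwise together with $\|\sigma_j\|=1$, every term is bounded in norm by a constant times $\beta$ times a product of factors, each of which is either $\||\bchi_\alpha|^{1/2}(h_\bk^{(0)}-i)^{-1}|\bchi_\alpha|^{1/2}\|\leqslant C\alpha^{d'}$, or $\||\bchi_\alpha|^{1/2}(h_\bk^{(0)}-i)^{-1}\|\leqslant C\alpha^{d'/2}$ (appearing twice, at the two ends), times powers of $\|R_0(z)\|$, $\|R_0(i)\|$ and $|z-i|$, all of which are bounded uniformly in $\alpha,\beta,\bk,z$ on the region $|z|\leqslant K_0'\pi^d/2$. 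In each monomial the two $|\bchi_\alpha|^{1/2}$ factors coming from $W_j$ and $U_l$ contribute either one factor $\alpha^{d'}$ (if they sit on the same free resolvent) or two factors $\alpha^{d'/2}$ (if they are split), giving $\alpha^{d'}$ overall; collecting, $\|W_j R_0(z) U_l\|\leqslant C\alpha^{d'}\beta$.

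The main obstacle, and the only real subtlety, is the bookkeeping that guarantees the \emph{total} power of $\alpha$ is exactly $d'$ and not less: one must arrange the resolvent-identity expansion so that each of the two weights $|\chi_{j,\alpha}|^{1/2}$, $|\chi_{l,\alpha}|^{1/2}$ is absorbed against an adjacent factor of $(h_\bk^{(0)}-i)^{-1}$ via \eqref{eq:res-est-1} or \eqref{eq:res-est-2}, with no weight left ``dangling'' against a bare $R_0(z)$ (which carries no $\alpha$). Taking the expansion one order deep — so that both ends terminate in $R_0(i)$ — is what makes this work, since then the remaining middle factor $R_0(z)$ is sandwiched between two $R_0(i)$'s and contributes only its uniformly bounded norm. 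A secondary point to check is that $|z-i|$ stays bounded on the relevant $z$-region, which is immediate since $|z|\leqslant K_0'\pi^d/2$. When $d=2$, one runs the argument with any $\tilde d<2$ in place of $d'$, exactly as in the previous lemma, and then optimizes; no new ideas are needed.
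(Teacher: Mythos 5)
Your proof is correct and follows essentially the same route as the paper: the two-step resolvent-identity expansion $R_0(z)=R_0(i)+(z-i)R_0(i)^2+(z-i)^2R_0(i)R_0(z)R_0(i)$, the identification of $R_0(i)$ with the $Q_\bz$-restriction of $(h_\bk^{(0)}-i)^{-1}$ using that $Q_\bz$ commutes with $h_\bk^{(0)}$, and then \eqref{eq:res-est-1}, \eqref{eq:res-est-2} for the first two terms together with Lemma~\ref{lem4}'s uniform bound on $\|R_0(z)\|$ for the sandwiched remainder. The only cosmetic difference is in the remainder term: the paper uses the additive splitting $R_0(i)=(h_\bk^{(0)}-i)^{-1}-(P_\bz(h_\bk^{(0)}-i)P_\bz)^{-1}$ and controls the second piece via \eqref{eq:int-1}, whereas you absorb the extra $Q_\bz$ directly via $\|Q_\bz\|\leqslant 1$ and the commutation of $Q_\bz$ with $(h_\bk^{(0)}-i)^{-1}$; both give $\|\sqrt{|\bchi_\alpha|}R_0(i)\|\leqslant C\sqrt{\alpha^{d'}}$.
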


\begin{proof}
Due to Lemma~\ref{lem4}, we have for $|z|\leqslant K_0'\pi^d/2$,
\begin{equation}\label{something}
 \|R_0(z)\|\leqslant\frac{2}{K_0'\pi^d} \, .
\end{equation}
Using the first resolvent identity, we get for $j,l=1,2,3$:
\begin{equation}\label{eq:re-1}
\begin{split}
 W_j R_0(z) U_l = W_j R_0(i) U_l + (z-i) W_j R_0(i)^2 U_l \\+ (z-i)^2 W_j R_0(i)R_0(z)R_0(i)  U_l .
\end{split}
\end{equation}
We shall separately estimate each term on the right hand side of \eqref{eq:re-1}.\\
Since $h_\bk^{(0)}$ commutes with the projections $P_\bm$ we have
\begin{equation}\label{eq:re-4}
 R_0(i) =    
 (h_\bk^{(0)} - i )^{-1}   
 -  (P_\bz  (h_\bk^{(0)} - i )P_\bz\!\!\upharpoonright_{{\rm Ran}P_\bz} )^{-1} .
\end{equation}

Note that due to the definition of $U_l$ and $W_j$ for any $z$ such that $|z|\leqslant K_0'\pi^d/2$
\begin{equation}\label{eq:re-added}W_j(P_\bz  (h_\bk^{(0)} - z )P_\bz\!\!\upharpoonright_{{\rm Ran}P_\bz} )^{-1}U_l=0.\end{equation} 
%
%
%

The identity \eqref{eq:re-4} together with  inequality
\eqref{eq:res-est-1} and \eqref{eq:re-added} imply that there exists $c>0$, such that  for $|\alpha|<1/2$ and all $k\in\Omega$
\begin{equation}\label{eq:re-8}
\begin{split}
\| W_j R_0(i)U_l \| & = \| W_j (h_\bk^{(0)} -i)^{-1}U_l \| \\
& \leq c  \beta\alpha^{d'}.
\end{split}
\end{equation}
This bounds the first term on the right hand side of \eqref{eq:re-1}.

To estimate the second one we first notice that
\begin{equation*}
\begin{split}
  (h_\bk^{(0)} - i)^{-2}   =  
  (Q_\bz  (h_\bk^{(0)} - i )Q_\bz\!\!\upharpoonright_{{\rm Ran}Q_\bz})^{-2} 
  + (P_\bz  (h_\bk^{(0)} - i )P_\bz\!\!\upharpoonright_{{\rm Ran}P_\bz})^{-2}.
\end{split}
\end{equation*}
It is easy to see that the equation \eqref{eq:re-added}  remains true with a power -2. Then,
\begin{equation}\label{eq:re-9}
\begin{split}
 \| (z-i) &W_j R_0(i)^2 U_l \| = \| (z-i) W_j  (h_\bk^{(0)} - i )^{-2} U_l\| \\
 & \leq \sqrt{1+\frac{K_0'^2\pi^{2d}}{4}}\beta \| \sqrt{|\bchi_\alpha |}(h_\bk^{(0)} - i )^{-1} \| \, \|(h_\bk^{(0)} - i )^{-1} \sqrt{|\bchi_\alpha|} \|\\ 
 & \leq C \beta \alpha^{d'},
\end{split}
\end{equation}
 for some $C>0$ independent of $\alpha$ and $\beta$, where we used \eqref{eq:res-est-2} in the last inequality.

Finally, we bound the last term on the right hand side of
 \eqref{eq:re-1}. Observe that from  inequalities \eqref{eq:int-1} and \eqref{eq:res-est-2}
we obtain that there exists $c,C>0$ such that for all $\alpha\le 1/2$
\begin{align*}
 \| \sqrt{|\bchi_\alpha|} R_0(i) \|& \leq \| \sqrt{|\bchi_\alpha|} (h_\bk^{(0)}-i)^{-1}\|
 + \| \sqrt{|\bchi_\alpha|} P_\bz (P_\bz (h_\bk - i)P_\bz)^{-1} \| \\&\leq c 
\sqrt{\alpha^{d'}}+c\alpha\leq C\sqrt{\alpha^{d'}}.
\end{align*}
Therefore, using \eqref{something} and \eqref{eq:res-est-2}
\begin{align*}
 &\| (z-i)^2 W_j R_0(i)R_0(z) R_0(i) U_l \| \\ &\leq \beta |z-i|^2
 \| \sqrt{|\bchi_\alpha|} R_0(i) \|\, \| R_0(z)\|\, \|R_0(i)
 \sqrt{|\bchi_\alpha|}\|\\
 & \leq  C \beta |z-i|^2 \alpha^{d'}. 
\end{align*}
Summing the latter bound together with \eqref{eq:re-8} and
\eqref{eq:re-9} (in view of \eqref{eq:re-1}) concludes the proof.
%
\end{proof}

\begin{lemme}\label{resolventidentity}
 Let \[\mathcal{S}=\{z\in\rho_{Q_\bz}(h_\bk^{(0)}):\sup_{j,l}\|W_jR_0(z)U_l\|<1/3\}.\]
 Then, for every $z\in\mathcal{S}$, we have $z\in \rho_{Q_\bz}(h_\bk(\alpha,\beta))$.
\end{lemme}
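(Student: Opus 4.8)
The plan is to expand $Q_\bz(h_\bk(\alpha,\beta)-z)Q_\bz$ around $Q_\bz(h_\bk^{(0)}-z)Q_\bz$ and to show that the perturbation is invertibly small on $\Ran Q_\bz$ whenever $z\in\mathcal S$. First I would write $Q_\bz(h_\bk(\alpha,\beta)-z)Q_\bz = Q_\bz(h_\bk^{(0)}-z)Q_\bz + \beta\sum_{i=1}^3 Q_\bz\chi_{i,\alpha}\sigma_i Q_\bz$ on $\Ran Q_\bz$, and factor the perturbation through absolute values: $\beta\, Q_\bz\chi_{i,\alpha}\sigma_i Q_\bz = U_i W_i$ with $U_i,W_i$ as defined just before the lemma (using $\chi_{i,\alpha}=\mathrm{sgn}(\chi_{i,\alpha})|\chi_{i,\alpha}|$ and $\sqrt\beta\cdot\sqrt\beta=\beta$). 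Then, on $\Ran Q_\bz$,
\[
 Q_\bz(h_\bk(\alpha,\beta)-z)Q_\bz
 = \big(Q_\bz(h_\bk^{(0)}-z)Q_\bz\big)\Big(\mathbf 1 + R_0(z)\sum_{i=1}^3 U_iW_i\Big),
\]
since $z\in\rho_{Q_\bz}(h_\bk^{(0)})$ guarantees $R_0(z)$ exists.

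The key step is then a Neumann-series / Schur-type argument: the operator $\mathbf 1 + \sum_i R_0(z)U_iW_i$ is invertible provided $\|\sum_i R_0(z)U_iW_i\|<1$. However, $R_0(z)U_i$ and $W_i$ are not individually small — only the sandwiched quantities $W_jR_0(z)U_l$ are $O(\alpha^{d'}\beta)$ by Lemma~\ref{lem3.2} and the definition of $\mathcal S$. So rather than bounding $R_0(z)U_iW_i$ directly, I would instead show that $\mathbf 1 + \sum_i R_0(z)U_iW_i$ is invertible by exhibiting its inverse through a "sandwiched" Neumann series: one checks that $\mathbf 1 + \sum_i R_0(z)U_iW_i$ is invertible iff the $3\times 3$ block operator matrix $\mathbf 1 + (W_jR_0(z)U_l)_{j,l}$ is invertible on $(\Ran Q_\bz)^3$ — this is the standard Feshbach/Schur equivalence, or one can invert it directly by the formula $\big(\mathbf 1 + \sum_i R_0(z)U_iW_i\big)^{-1} = \mathbf 1 - \sum_i R_0(z)U_i\big[(\mathbf 1 + (W_jR_0(z)U_l)_{jl})^{-1}\big]_{i}\,W$, where the inner inverse exists because $\sup_{j,l}\|W_jR_0(z)U_l\|<1/3$ forces the norm of the $3\times 3$ block matrix $(W_jR_0(z)U_l)_{jl}$ to be $<1$ (its norm is at most $3\sup_{j,l}\|W_jR_0(z)U_l\|<1$). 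Once this resolvent-type identity is written down, one verifies by direct multiplication that it indeed gives a two-sided inverse of $\mathbf 1 + \sum_i R_0(z)U_iW_i$.

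Combining the two displays, $Q_\bz(h_\bk(\alpha,\beta)-z)Q_\bz$ is a product of two invertible operators on $\Ran Q_\bz$, hence invertible, i.e.\ $z\in\rho_{Q_\bz}(h_\bk(\alpha,\beta))$; moreover the same manipulation yields the explicit formula
\[
 R(z) = \Big(\mathbf 1 + \sum_{i=1}^3 R_0(z)U_iW_i\Big)^{-1}R_0(z),
\]
which will be convenient later when estimating the Feshbach term. I expect the main obstacle to be purely bookkeeping: keeping track of the three channels $i=1,2,3$ and of the order of the factors $U_i$, $W_i$, $R_0$ so that the algebraic identity for the inverse is genuinely two-sided, and justifying that $\sup_{j,l}\|W_jR_0(z)U_l\|<1/3$ (rather than just $<1$) is what is needed to control the $3\times 3$ block matrix — the factor $3$ from summing over channels is exactly why the threshold $1/3$ appears in the definition of $\mathcal S$.
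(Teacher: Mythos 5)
Your proof is correct, but the route is genuinely different from the paper's. The paper starts from $z\in\mathcal S\cap\mathbb R$, regularizes to $z_\epsilon=z+i\epsilon$ (so that $R(z_\epsilon)$ is guaranteed to exist, the fiber operator being self-adjoint), expands $R(z_\epsilon)$ via the second resolvent identity into a finite Neumann-type sum plus a remainder $T_{N+1}$, kills the remainder by the combinatorial bound $\|(\bU\bW^T R_0)^N\|\le C\,3^N\bigl(\sup_{j,l}\|W_jR_0U_l\|\bigr)^{N-1}\to 0$, obtains a bound on $\|R(z_\epsilon)\|$ uniform in $\epsilon$, and then lets $\epsilon\to 0$. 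You instead factor $Q_\bz(h_\bk(\alpha,\beta)-z)Q_\bz = \bigl(Q_\bz(h_\bk^{(0)}-z)Q_\bz\bigr)\bigl(\mathbf 1 + R_0(z)\sum_i U_iW_i\bigr)$ and pass via the $\mathbf 1+AB$ vs. $\mathbf 1+BA$ equivalence (with $A=R_0(z)\bU$, $B=\bW^T$) to the $3\times 3$ block operator $\mathbf 1 + \bW^TR_0(z)\bU$, whose norm is indeed at most $3\sup_{j,l}\|W_jR_0(z)U_l\|<1$, hence Neumann-invertible. Both arguments hinge on the same key observation — only the \emph{sandwiched} quantity $W_jR_0U_l$ is small, not $\bU\bW^TR_0$ itself, and the three channels cost a factor $3$, which is why the threshold is $1/3$ — but your factorization approach has the advantage that it exhibits invertibility directly without needing the $z_\epsilon$ detour, since you never manipulate $R(z)$ before knowing it exists. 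The paper's route, on the other hand, produces the explicit convergent series for $R(z)$ which it then reuses verbatim in the proof of the next lemma; you recover the same closed formula $R(z)=R_0 - R_0\bU(\mathbf 1+\bW^TR_0\bU)^{-1}\bW^TR_0$ as a by-product, so nothing is lost. One small caution: when writing the factorization, one should check that $\mathbf 1+R_0(z)\sum_i U_iW_i$ preserves the domain of $Q_\bz h_\bk^{(0)}Q_\bz$ restricted to $\Ran Q_\bz$; this holds because $R_0(z)$ maps $\Ran Q_\bz$ into that domain and $\sum_i U_iW_i$ is bounded, but it is worth stating explicitly.
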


\begin{proof}
 Let $z\in\mathcal{S}\cap\mathbb{R}$. Put $z_\epsilon=z+i\epsilon$. The set $\mathcal{S}$ being open, for $\epsilon>0$ small enough $z_\epsilon\in\mathcal{S}$.
 We denote $\bU=(U_1,U_2,U_3)$ and $\bW=(W_1,W_2,W_3)$.
 Applying the second resolvent identity several times, we find for any $N>0$:
 \[ R(z_\epsilon)=R_0(z_\epsilon)\left(\sum_{n=0}^N(-1)^n\left(\bU\bW^TR_0(z_\epsilon)\right)^n+T_{N+1}\right)\]
 with \[T_{N+1}=(-1)^{N+1}\left(\bU\bW^TR_0(z_\epsilon)\right)^N\bU\bW^TR(z_\epsilon).\]
%
 But we have that
\begin{align*}
  \left(\bU\bW^TR_0(z_\epsilon)\right)^N
  =\sum_{i_1,...,i_N}U_{i_1}W_{i_1}R_0(z_\epsilon)
  U_{i_2}...W_{i_{N-1}}R_0(z_\epsilon)U_{i_N}W_{i_N}R_0(z_\epsilon)
\end{align*}
and then
\[\left\| \left(\bU\bW^TR_0(z_\epsilon)\right)^N\right\|\leq C 3^N \left(\sup_{j,l}\|W_jR_0(z_\epsilon)U_l\|\right)^{N-1}\]
which tends to 0 as $N\to \infty$ since $z_\epsilon\in \mathcal{S}$.

Then, at fixed $\epsilon$, we have
\begin{equation}\label{eq:reso}
 R(z_\epsilon)
 =R_0(z_\epsilon)\sum_{n=0}^\infty(-1)^n
 \left(\bU\bW^TR_0(z_\epsilon)\right)^n.
\end{equation}
 
Using the definition of $\mathcal{S}$ and equation \eqref{something}, we obtain that this resolvent is bounded uniformly in $\epsilon$, so we can take the limit $\epsilon\to 0$ and thus
\[
 \forall z \in\mathcal{S} ; z\in \rho_{Q_\bz}(h_\bk(\alpha,\beta)).
\]
\end{proof}

We are now ready to study the invertibility of Feshbach's operator $\mathcal{F}_{P_\bz}(z)$ for $\alpha^{d'}\beta$ small enough.
 To this purpose, we use the two following lemmas (similar to Lemmas~2.2 and~2.3 of \cite{barbaroux}):

\begin{lemme}\label{lem2.2}
There exists a constant $\delta\in]0,1[$ such that, for all $\alpha\in]0,1/2[$ and $\beta>0$ satisfying $\alpha^{d'}\beta<\delta$,
 $Q_\bz(h_\bk(\alpha,\beta)-z)Q_\bz$ is invertible on the range of $Q_\bz$ for all $z\in [-K_0'\pi^d/2,K_0'\pi^d/2]$ and $\bk\in\Omega$.
\end{lemme}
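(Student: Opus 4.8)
The plan is to deduce Lemma~\ref{lem2.2} directly from Lemmas~\ref{lem4}, \ref{lem3.2} and~\ref{resolventidentity}. First I would note that for $|z|\leqslant K_0'\pi^d/2$ we have $z\in\rho_{Q_\bz}(h_\bk^{(0)})$: indeed Lemma~\ref{lem4} gives $\|h_\bk^{(0)}Q_\bz\bbf\|\geqslant\pi^dK_0'\|\bbf\|$, so for such $z$ the operator $Q_\bz(h_\bk^{(0)}-z)Q_\bz$ is bounded below by $K_0'\pi^d/2$ on $\Ran Q_\bz$, hence invertible (it is self-adjoint on $\Ran Q_\bz$, so injectivity with closed range and dense range gives invertibility), and moreover $\|R_0(z)\|\leqslant 2/(K_0'\pi^d)$, which is exactly \eqref{something}.

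Next I would invoke Lemma~\ref{lem3.2}: there is a constant $C>0$, independent of $\alpha\in]0,1/2]$ and $\beta>0$, such that $\|W_jR_0(z)U_l\|\leqslant C\alpha^{d'}\beta$ for all $|z|\leqslant K_0'\pi^d/2$ and all $j,l$. Therefore, choosing $\delta\in]0,1[$ small enough that $C\delta<1/3$, the condition $\alpha^{d'}\beta<\delta$ forces $\sup_{j,l}\|W_jR_0(z)U_l\|\leqslant C\alpha^{d'}\beta<1/3$ for every $z\in[-K_0'\pi^d/2,K_0'\pi^d/2]$ and every $\bk\in\Omega$. In the language of Lemma~\ref{resolventidentity}, this says precisely that the whole segment $[-K_0'\pi^d/2,K_0'\pi^d/2]$ is contained in the set $\mathcal{S}$ (for each fixed $\bk$).

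Finally I would apply Lemma~\ref{resolventidentity}, which asserts that $\mathcal{S}\subset\rho_{Q_\bz}(h_\bk(\alpha,\beta))$; combined with the previous step this yields $[-K_0'\pi^d/2,K_0'\pi^d/2]\subset\rho_{Q_\bz}(h_\bk(\alpha,\beta))$ for all $\bk\in\Omega$, i.e.\ $Q_\bz(h_\bk(\alpha,\beta)-z)Q_\bz$ is invertible on $\Ran Q_\bz$ for all such $z$ and $\bk$, as claimed. There is essentially no obstacle here: the real work has already been done in establishing the resolvent-decay estimate \eqref{eq:res-est-1}--\eqref{eq:res-est-2} and in Lemma~\ref{lem3.2}; the only point requiring a word of care is that $\delta$ must be chosen uniformly in $\bk$, which is automatic since the constant $C$ in Lemma~\ref{lem3.2} is independent of $\bk$.
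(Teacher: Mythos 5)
Your proposal is correct and follows exactly the same route as the paper: deduce the result from Lemma~\ref{resolventidentity} by using Lemma~\ref{lem3.2} (and Lemma~\ref{lem4} for \eqref{something}) to check that every $z$ in the interval lies in $\mathcal{S}$ once $\alpha^{d'}\beta$ is small enough. You merely spell out the choice of $\delta$, the membership $z\in\rho_{Q_\bz}(h_\bk^{(0)})$, and the uniformity in $\bk$, all of which the paper's one-line proof leaves implicit.
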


\begin{proof}
Notice that the proof of this lemma follows from
Lemma~\ref{resolventidentity} since  $z\in \mathcal{S}$ provided
$\alpha^{d'}\beta$ is small enough, according to Lemma~\ref{lem3.2}. 
\end{proof}

Put 
\[
 \mathcal{B}_{P_\bz}(z)
 =\beta^2 P_\bz\bchi_\alpha\cdot\bs 
 Q_\bz(Q_\bz(h_\bk(\alpha,\beta)-z)Q_\bz)^{-1}Q_\bz\bchi_\alpha\cdot\bs P_\bz.
\]

\begin{lemme}\label{lem2.3}
There exist two constants $\delta\in]0,1[$ and $C>0$ such that, for all $\alpha\in]0,1/2[$ and $\beta>0$ satisfying $\alpha^{d'}\beta<\delta$, we have
\[\|\mathcal{B}_{P_\bz}(z)\bpsi\|\leqslant C\beta^2\alpha^{2+d'}\|\bpsi\|\] for all $z\in [-K_0'\pi^d/2,K_0'\pi^d/2]$, $k\in\Omega$ and $\bpsi \in L^2(\Omega,\mathbb{C}^2)$.
\end{lemme}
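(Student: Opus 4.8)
The plan is to factor $\mathcal{B}_{P_\bz}(z)$ into a product of three blocks: two \emph{outer} blocks that involve the projection $P_\bz$ and are therefore of size $\alpha$ by \eqref{eq:int-1}, and one \emph{inner} block $W_j R(z) U_l$ that carries the extra resolvent gain $\alpha^{d'}$. Expand $\bchi_\alpha\cdot\bs=\sum_i \chi_{i,\alpha}\sigma_i$ in each of the two occurrences in the definition of $\mathcal{B}_{P_\bz}(z)$, and split every scalar factor as $\beta\chi_{i,\alpha}=\bigl(\sqrt\beta\,\mathrm{sgn}(\chi_{i,\alpha})\sqrt{|\chi_{i,\alpha}|}\bigr)\bigl(\sqrt\beta\,\sqrt{|\chi_{i,\alpha}|}\bigr)$. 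Since each $\sigma_i$ commutes with the scalar multiplication operators and with $P_\bz$, and since $\mathrm{sgn}(\chi_{i,\alpha})|\chi_{i,\alpha}|=\chi_{i,\alpha}$, one checks that $A_jW_j=\beta P_\bz\chi_{j,\alpha}\sigma_j Q_\bz$ and $U_lB_l=\beta Q_\bz\chi_{l,\alpha}\sigma_l P_\bz$, where $A_j:=\sqrt\beta\,P_\bz\,\mathrm{sgn}(\chi_{j,\alpha})\sqrt{|\chi_{j,\alpha}|}$ and $B_l:=\sqrt\beta\,\sqrt{|\chi_{l,\alpha}|}\,\sigma_l\,P_\bz$, so that $\beta P_\bz(\bchi_\alpha\cdot\bs)Q_\bz=\sum_j A_jW_j$ and $\beta Q_\bz(\bchi_\alpha\cdot\bs)P_\bz=\sum_l U_lB_l$. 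This gives the identity
\[ \mathcal{B}_{P_\bz}(z)=\sum_{j,l=1}^{3} A_j\, W_j R(z) U_l\, B_l , \]
hence $\|\mathcal{B}_{P_\bz}(z)\bpsi\|\le \sum_{j,l}\|A_j\|\,\|W_j R(z) U_l\|\,\|B_l\|\,\|\bpsi\|$, and it remains to bound the three norms.

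For the outer blocks, $\mathrm{sgn}(\chi_{j,\alpha})$ and $\sigma_j$ are contractions, $\sigma_j$ commutes with $P_\bz$ and with $\sqrt{|\chi_{j,\alpha}|}$, and $|\chi_{j,\alpha}|\le|\bchi_\alpha|$ pointwise; therefore $\|A_j\|=\|A_j^*\|\le\sqrt\beta\,\|\sqrt{|\bchi_\alpha|}P_\bz\|$ and likewise $\|B_l\|\le\sqrt\beta\,\|\sqrt{|\bchi_\alpha|}P_\bz\|$, and both are bounded by $C\sqrt\beta\,\alpha$ by \eqref{eq:int-1}. For the inner block I claim $\|W_j R(z) U_l\|\le C\alpha^{d'}\beta$, uniformly in $\bk\in\Omega$ and in $z\in[-K_0'\pi^d/2,K_0'\pi^d/2]$, as soon as $\alpha^{d'}\beta$ is small enough. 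Lemma~\ref{lem3.2} already gives $\|W_j R_0(z) U_l\|\le C_0\alpha^{d'}\beta$; to pass from $R_0(z)$ to $R(z)$, observe that on $\mathrm{Ran}\,Q_\bz$ one has $Q_\bz(h_\bk(\alpha,\beta)-z)Q_\bz=Q_\bz(h_\bk^{(0)}-z)Q_\bz+\sum_{m=1}^3 U_mW_m$, that $R(z)$ exists for $\alpha^{d'}\beta<\delta$ by Lemma~\ref{lem2.2}, and hence the resolvent identity yields $W_j R(z) U_l=W_j R_0(z) U_l-\sum_{m=1}^3 W_j R_0(z) U_m\,W_m R(z) U_l$. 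Setting $a:=\sup_{j,l}\|W_j R(z) U_l\|$ (which is finite because $W_j$ and $U_l$ are bounded and $\|R(z)\|<\infty$), this gives $a\le C_0\alpha^{d'}\beta+3C_0\alpha^{d'}\beta\,a$, whence $a\le 2C_0\alpha^{d'}\beta$ provided $\delta$ is chosen so that $3C_0\delta\le 1/2$.

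Combining the three estimates, $\|\mathcal{B}_{P_\bz}(z)\bpsi\|\le 9\,(C\sqrt\beta\alpha)(2C_0\alpha^{d'}\beta)(C\sqrt\beta\alpha)\,\|\bpsi\|=C'\beta^2\alpha^{2+d'}\|\bpsi\|$, which is the asserted bound; one takes $\delta$ to be the minimum of the constant furnished by Lemma~\ref{lem2.2} (so that $R(z)$ exists) and the threshold $1/(6C_0)$ above (so that the geometric-series step is valid). I expect the only mildly delicate points to be the algebraic bookkeeping in the $\mathrm{sgn}/\sqrt{|\chi_{i,\alpha}|}$ splitting — checking that $A_jW_j$ and $U_lB_l$ really reassemble $\beta P_\bz\chi_{j,\alpha}\sigma_j Q_\bz$ and $\beta Q_\bz\chi_{l,\alpha}\sigma_l P_\bz$ — and the transfer from $R_0(z)$ to $R(z)$; neither is a genuine obstacle, and uniformity in $\bk$ is automatic since Lemmas~\ref{lem3.2} and~\ref{lem2.2} are already uniform in $\bk$.
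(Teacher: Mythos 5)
Your proof is correct. The main structural difference from the paper is in how the resolvent identity is deployed: the paper writes $R(z)=R_0(z)-R_0(z)\,\bU(I_3+\bW^TR_0(z)\bU)^{-1}\bW^T R_0(z)$, substitutes this into $\mathcal{B}_{P_\bz}(z)$, and estimates two separate terms (a leading one of order $\beta^2\alpha^{2+d'}$ coming from $R_0$, and a correction of order $\beta^3\alpha^{2+2d'}$), whereas you first establish the single uniform bound $\sup_{j,l}\|W_jR(z)U_l\|\le 2C_0\alpha^{d'}\beta$ via a bootstrap/fixed-point application of the second resolvent identity, and then insert it into the factorization $\mathcal{B}_{P_\bz}(z)=\sum_{j,l}A_j\,W_jR(z)U_l\,B_l$. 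Both routes hinge on exactly the same ingredients — Lemma~\ref{lem3.2} for $W_jR_0(z)U_l$, the bound \eqref{eq:int-1} for the outer $P_\bz$ blocks, and the existence of $R(z)$ from Lemma~\ref{lem2.2} — and give the same final estimate. Yours is arguably a bit cleaner since it isolates the inner block once and for all; the paper's makes the relative size of the correction term explicit. Your bookkeeping on $A_j,B_l$ and the identity $Q_\bz(h_\bk(\alpha,\beta)-z)Q_\bz=Q_\bz(h_\bk^{(0)}-z)Q_\bz+\sum_m U_mW_m$ both check out, and you correctly note that the finiteness of $a$ needed to close the fixed-point inequality comes from Lemma~\ref{lem2.2}.
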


\begin{proof}
 According to equation~\eqref{eq:reso}, we have that
\begin{align*}
 R(z)&=R_0(z)+R_0(z)\sum_{n\geq1}(-1)^n\left(\bU\bW^TR_0(z)\right)^n\\
 &=R_0(z)-R_0(z)\sum_{n\geq0}(-1)^n\bU\left(\bW^TR_0(z)\bU\right)^n\bW^TR_0(z)\\
 &=R_0(z)-R_0(z)\bU(I_3+\bW^TR_0(z)\bU)^{-1}\bW^T R_0(z).
\end{align*}
We remark that $\bW^TR_0(z)\bU$ is an operator acting on $\left(L^2(\Omega, \C^2)\right)^3$.

The definition of $\mathcal{B}_{P_\bz}$ and these equalities give us 2 terms to estimate.
On the one hand,
\begin{equation*}
\begin{split}
  \label{eq:3.23}  
  \beta^2\|P_\bz\bchi_\alpha\cdot\bs R_0(z)
  \bchi_\alpha\cdot\bs P_\bz \|&
  \le \beta\|P_\bz |\bchi_\alpha|^{1/2}\| \sum_{i,j}\|W_iR_0(z) U_j\| 
  \||\bchi_\alpha|^{1/2} P_\bz\|\\&
  \le c\beta^2\alpha^{2+d'},
\end{split}
\end{equation*}
where we used Lemma \ref{lem3.2} and \eqref{eq:int-1}. 

On the other hand,
assuming that $\alpha^{d'}\beta$ is so small that $ \|\bW^TR_0(z)\bU\|<1/2$ we have
\begin{align*}
   &\beta^2\|P_\bz \bchi_\alpha \cdot \bs
   R_0(z)  \bU(I_3+\bW^TR_0(z)\bU)^{-1}
   \bW^T R_0(z)\bchi_\alpha\cdot\bs P_\bz\|\\
   &\le \beta \|P_\bz|\bchi_\alpha|^{1/2}\|  
   \sum_{i,j}\|W_iR_0(z) U_j\|
   \| (I_3+\bW^TR_0(z)\bU)^{-1}\| \\ 
   &\ \ \ \times \sum_{i,j}\|W_iR_0(z) U_j\|
   \||\bchi_\alpha|^{1/2} P_\bz\|
   \le c \beta^3\alpha^{2+2d'}.
\end{align*}
The latter inequality together with \eqref{eq:3.23} finishes the proof
of the lemma.
\end{proof}
\begin{proof}[Proof of Theorem~\ref{thm:main}]
  In view of \eqref{spectra} it is enough to show the invertibility of
  the Feshbach operator uniformly in $\bk\in\Omega$. Using Lemmas 
  \ref{lem2.1} and \ref{lem2.3} we get that for any
  $\bpsi\in P_\bz L^2(\Omega,\C^2)$
\begin{align*}
  \norm{\fesh_{P_\bz}(z)\bpsi}&\ge  \norm{ (P_\bz(h_\bk(\alpha,\beta) - z)P_\bz ) \bpsi} -\norm {\B _{P_\bz}(z)\bpsi}\\
         &\ge (\beta\alpha^2 \frac{|\bP|}{2}-|z|-c\alpha^{2+d'} \beta^2)\norm{\bpsi}.
\end{align*}
This concludes the proof by picking $\alpha^{d'}\beta$ so small that
$\frac{|\bP|}{2} > c\alpha^{d'}\beta$. 
The theorem is then proven with $C=\frac{c}{2}$.
\end{proof}

\appendix
\section{Bloch-Floquet transformation}
In this appendix, we study the Bloch-Floquet transformation applied to our operator $H(\alpha,\beta)$. Because the potential is bounded and $\Z^2$-periodic, it is enough to study this transformation applied to $H_0$. Because $H_0$ is unbounded, we prefer to work with its resolvent and we start with the following proposition.
Recall that we have denoted by $(H_0-i)^{-1}(\bx,\bx')$ the integral kernel of $(H_0-i)^{-1}$.

\begin{prop}\label{appnoyint}
Let $\mathcal{U}$ be the Bloch-Floquet transformation as defined in \eqref{defbloch}.
\[
 \mathcal{U}(\bs\cdot \bF(-i\nabla)-i)^{-1}\mathcal{U}^*
 =\intdir{\Omega}{d\bk}{g_{\bk}}\, ,
\] 
where, for $\bk\in\Omega$, the operator $g_\bk:L^2(\Omega,\C^2)\to L^2(\Omega,\C^2)$ has an integral kernel given by 
\[
 g_\bk(\bx,\bx')=\sum_{\bgamma\in\Z^2}e^{2i\pi\bgamma\cdot \bk}e^{2i\pi\bx\cdot \bk}(H_0-i)^{-1}(\bx+\bgamma,\bx')e^{-2i\pi\bx'\cdot \bk}.
\]
\end{prop}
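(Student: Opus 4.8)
The plan is to write the fibre operators down explicitly from the integral kernel of $(H_0-i)^{-1}$, to check that the formula in the statement defines a bounded measurable family $(g_\bk)_{\bk\in\Omega}$ on $L^2(\Omega,\C^2)$, and then to verify that $\int_\Omega^\oplus g_\bk\,d\bk$ and $\mathcal{U}(H_0-i)^{-1}\mathcal{U}^*$ have the same matrix elements on the dense subspace $\mathcal{U}(\mathscr{S}(\R^2,\C^2))$ of $L^2(\Omega^2,\C^2)$, which forces them to coincide. The structural input used throughout is that $H_0=\bs\cdot\bF(-i\nabla)$ is a Fourier multiplier, hence commutes with every translation; equivalently, the kernel furnished by Proposition~\ref{kernel-existence} is $\Z^2$-translation invariant: $(H_0-i)^{-1}(\bx+\bgamma,\bx'+\bgamma)=(H_0-i)^{-1}(\bx,\bx')$ for all $\bgamma\in\Z^2$.

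First I would show that, for each $\bk$, the series defining $g_\bk(\bx,\bx')$ converges absolutely for a.e.\ $(\bx,\bx')\in\Omega\times\Omega$ and that $g_\bk$ is bounded on $L^2(\Omega,\C^2)$, uniformly in $\bk$. This is where Proposition~\ref{kernel-existence} is used: $|(H_0-i)^{-1}(\bx+\bgamma,\bx')|\le M_d(\bx+\bgamma,\bx')$, and for $\bx,\bx'\in\Omega$ and all but finitely many $\bgamma$ the right-hand side equals the off-diagonal part $|\bx+\bgamma-\bx'|^{-3}$, which is bounded by $C|\bgamma|^{-3}$ and hence summable over $\Z^2$; the finitely many remaining $\bgamma$ each contribute a locally integrable function on $\Omega\times\Omega$ (at worst a shifted singularity of order $|\bx+\bgamma-\bx'|^{-(2-d')}$, or logarithmic when $d=2$). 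A Schur test then bounds $g_\bk$ on $L^2(\Omega,\C^2)$: the bounded part of the dominating kernel gives a bounded operator on the bounded region $\Omega\times\Omega$, while its near-diagonal part, a singularity of order $|\bx-\bx'|^{-(2-d')}$ with $2-d'<2$, gives a bounded operator by the Hardy--Littlewood--Sobolev inequality; and measurability of $\bk\mapsto g_\bk$ is clear from the explicit exponential dependence on $\bk$. (A posteriori $g_\bk=(h_\bk^{(0)}-i)^{-1}$, which gives an alternative route to boundedness.)

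Next I would compute matrix elements. Fix $\psi,\phi\in\mathscr{S}(\R^2,\C^2)$ and expand $\langle\mathcal{U}\psi,(\int_\Omega^\oplus g_\bk\,d\bk)\,\mathcal{U}\phi\rangle$ by inserting the expansions $(\mathcal{U}\psi)(\bx,\bk)=\sum_{\bgamma_1}e^{2i\pi\bk\cdot(\bx+\bgamma_1)}\psi(\bx+\bgamma_1)$ and $(\mathcal{U}\phi)(\bx',\bk)=\sum_{\bgamma_2}e^{2i\pi\bk\cdot(\bx'+\bgamma_2)}\phi(\bx'+\bgamma_2)$ together with the definition of $g_\bk$, whose sum I index by $\bgamma_3$. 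All $\bk$-dependent exponentials combine into the single phase $e^{2i\pi\bk\cdot(\bgamma_2+\bgamma_3-\bgamma_1)}$, so carrying out the $\bk$-integration first — legitimate by the absolute convergence of the previous step and the rapid decay of $\psi,\phi$ — produces the factor $\int_\Omega e^{2i\pi\bk\cdot\bm}\,d\bk=\delta_{\bm,0}$ for $\bm\in\Z^2$, i.e.\ the constraint $\bgamma_1=\bgamma_2+\bgamma_3$. Using $\Z^2$-translation invariance of $(H_0-i)^{-1}(\cdot,\cdot)$ to shift the kernel, reindexing the remaining lattice sums, and unfolding the two $\Omega$-integrals into integrals over $\R^2$ (the step that undoes the periodisation), one is left with $\int_{\R^2}\!\int_{\R^2}\overline{\psi(\bx)}(H_0-i)^{-1}(\bx,\bx')\phi(\bx')\,d\bx\,d\bx'=\langle\psi,(H_0-i)^{-1}\phi\rangle$, which by unitarity of $\mathcal{U}$ equals $\langle\mathcal{U}\psi,\mathcal{U}(H_0-i)^{-1}\mathcal{U}^*\mathcal{U}\phi\rangle$. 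Since $\mathcal{U}(\mathscr{S}(\R^2,\C^2))$ is dense and both operators are bounded, this identifies $\mathcal{U}(H_0-i)^{-1}\mathcal{U}^*$ with $\int_\Omega^\oplus g_\bk\,d\bk$, which is the claim.

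The only genuinely delicate point is the bookkeeping in the last step: keeping careful track of the three independent lattice sums and their phases so that the $\bk$-integration collapses them into a single sum, and checking that each interchange of summation with integration is licensed. Both are taken care of by the bound $|(H_0-i)^{-1}(\bx,\bx')|\le M_d(\bx,\bx')$ of Proposition~\ref{kernel-existence} — which is precisely the reason that estimate is established before this proposition.
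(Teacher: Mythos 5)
Your proposal is correct and takes essentially the same route as the paper: both arguments are a direct Poisson-type unfolding, using translation invariance of the free kernel $(H_0-i)^{-1}(\bx,\bx')$, orthogonality of the phases $e^{2i\pi\bk\cdot\bgamma}$ over $\Omega$ to collapse the three lattice sums, and the bound of Proposition~\ref{kernel-existence} to justify the rearrangements. The only cosmetic difference is that you work with matrix elements $\langle\mathcal{U}\psi,\,\cdot\,\mathcal{U}\phi\rangle$ on $\mathcal{U}(\mathscr{S}(\R^2,\C^2))$ and add an explicit Schur/Hardy--Littlewood--Sobolev check that each $g_\bk$ is bounded, whereas the paper applies the operator directly to a smooth $\Omega$-periodic test function and leaves the boundedness of $g_\bk$ to follow from the identification $g_\bk=(h_\bk^{(0)}-i)^{-1}$ in Proposition~\ref{appbloch}.
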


\begin{proof}
Let $\bbf\in\mathcal{C}^\infty(\Omega\times\R^2)^2$ such that $\bbf$ is $\Z^2$-periodic with respect to the second variable and $\bx\in\mathbb{R}^2$.
To avoid heavy notation, we will denote $K_0(\bx,\bx')=(H_0-i)^{-1}(\bx,\bx')$. We can then write
\begin{align*}
 ((\bs\cdot \bF(-i\nabla&)-i)^{-1}\mathcal{U}^*\bbf)(\bx)=\int_{\mathbb{R}^2}K_0(\bx,\bx')(\mathcal{U}^*\bbf)(\bx')d\bx'\\
 =&\sum_{\bgamma'\in\mathbb{Z}^2}\int_{\Omega}K_0(\bx,\bx'+\bgamma')\int_\Omega e^{-2i\pi(\bx'+\bgamma')\cdot \bk'} \bbf(\bx',\bk')d\bk'd\bx'.
\end{align*}

The Fourier coefficients  $\hat{\bbf}(\bx',\bgamma')=\int_\Omega e^{-2i\pi(\bx'+\bgamma')\cdot \bk'} \bbf(\bx',\bk')d\bk'$ decay faster than any polynomial in $\bgamma'$ uniformly in $\bx'$. The integral kernel $K_0(\bx+\bgamma, \bx'+\bgamma')$ has a decay like $|\bgamma-\bgamma'|^{-3}$ when $|\bgamma-\bgamma'|$ is larger than 3 uniformly in $\bx$ and $\bx'$. Moreover, $K_0(\bx+\bgamma, \bx'+\bgamma')$ is absolutely integrable with respect to $\bx'$ and 
\[
 \int_{\Omega} |K_0(\bx+\bgamma, \bx'+\bgamma')|d\bx'\leq C
\]
uniformly in $\bx\in\Omega$ and $|\bgamma-\bgamma'|\leq 3$. These facts justify the interchange of the various series below:
\begin{align*}
 &(\mathcal{U}(\bs\cdot \bF(-i\nabla)-i)^{-1}\mathcal{U}^*\bbf)(\bx,\bk)\\
  =&\sum_{\bgamma\in\mathbb{Z}^2}e^{2i\pi \bk\cdot(\bx+\bgamma)}\sum_{\bgamma'\in\mathbb{Z}^2}\int_{\Omega}K_0(\bx+\bgamma,\bx'+\bgamma')\hat{\bbf}(\bx',\bgamma')d\bx'\\
  =&\sum_{\bgamma'\in\mathbb{Z}^2}\sum_{\bgamma\in\mathbb{Z}^2}e^{2i\pi \bk\cdot \bx}e^{2i\pi \bk\cdot(\bgamma-\bgamma')}\int_{\Omega}K_0(\bx+(\bgamma-\bgamma'),\bx') e^{2i\pi \bgamma'\cdot \bk}\hat{\bbf}(\bx',\bgamma')d\bx'\\
  =&\sum_{\bgamma'\in\mathbb{Z}^2}\sum_{\tilde{\bgamma}\in\mathbb{Z}^2}e^{2i\pi \bk\cdot\tilde{\bgamma}}e^{2i\pi \bk\cdot \bx}\int_{\Omega}K_0(\bx+\tilde{\bgamma},\bx')e^{2i\pi \bk\cdot\bgamma'}\hat{\bbf}(\bx',\bgamma')d\bx'\\
  =&\int_{\Omega}\sum_{\bgamma\in\mathbb{Z}^2}e^{2i\pi \bk\cdot (\bx-\bx')}e^{2i\pi \bk\cdot\bgamma}K_0(\bx+\bgamma,\bx')\sum_{\bgamma'\in\mathbb{Z}^2 }e^{2i\pi \bk\cdot(\bgamma'+\bx')}\hat{\bbf}(\bx',\bgamma')d\bx'.
\end{align*}

In the last line, we identify the Fourier series representation of $\bbf(\bx',\cdot)$ at the point $\bk$. We finally obtain 
$$
\int_{\Omega}\sum_{\bgamma\in\mathbb{Z}^2}e^{2i\pi \bk\cdot(\bx+\bgamma-\bx')}K_0(\bx+\bgamma,\bx') \bbf(\bx',\bk)d\bx'=\int_\Omega g_{k}(\bx,\bx')\bbf(\bx',\bk)d\bx',
$$
which concludes the proof of Proposition~\ref{appnoyint}.
\end{proof}

\begin{prop}\label{appbloch}
We have
 \begin{equation*}
  \mathcal{U}H_0\mathcal{U}^*=\intdir{\Omega}{d\bk}{h_\bk^{(0)}}\, ,
 \end{equation*}
 with
 \begin{equation*}
  h_\bk^{(0)}=\bs\cdot \bF(-i\nabla_{\mathrm{per}}-2\pi \bk).
 \end{equation*}
\end{prop}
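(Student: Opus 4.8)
The plan is to reduce the statement to Proposition~\ref{appnoyint}, which already provides the Bloch--Floquet decomposition of the \emph{resolvent}: $\mathcal{U}(H_0-i)^{-1}\mathcal{U}^*=\intdir{\Omega}{d\bk}{g_\bk}$, with $g_\bk$ the bounded fiber operator whose kernel is written there. The operator $\intdir{\Omega}{d\bk}{h_\bk^{(0)}}$ is self-adjoint (each $h_\bk^{(0)}$ is self-adjoint and the family depends measurably on $\bk$) and its resolvent at $i$ is $\intdir{\Omega}{d\bk}{(h_\bk^{(0)}-i)^{-1}}$. Since a self-adjoint operator is determined by its resolvent at $i$, the identity $\mathcal{U}H_0\mathcal{U}^*=\intdir{\Omega}{d\bk}{h_\bk^{(0)}}$ will follow as soon as one checks that $g_\bk=(h_\bk^{(0)}-i)^{-1}$ for almost every $\bk\in\Omega$.

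To identify $g_\bk$ I would work in the orthonormal basis $\Psi_\bm=e^{2i\pi\bm\cdot\bx}$, $\bm\in\Z^2$, of $L^2(\Omega)$, tensored with $\C^2$. Since $-i\nabla_{\mathrm{per}}\Psi_\bm=2\pi\bm\,\Psi_\bm$, the operator $h_\bk^{(0)}$ is block-diagonal in this basis, acting on $\Psi_\bm\otimes\C^2$ by the matrix $\bs\cdot\bF(2\pi(\bm-\bk))$, so $(h_\bk^{(0)}-i)^{-1}$ acts there by $\bG(2\pi(\bm-\bk))$, where $\bG(\bp)=(\bs\cdot\bF(\bp)-i)^{-1}$ is the Fourier multiplier of $(H_0-i)^{-1}$ from Section~2. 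It thus suffices to prove the $2\times2$-matrix identity
\[
 \int_\Omega\int_\Omega\overline{\Psi_{\bm'}(\bx)}\,g_\bk(\bx,\bx')\,\Psi_\bm(\bx')\,d\bx\,d\bx'
 =\delta_{\bm,\bm'}\,\bG(2\pi(\bm-\bk)),\qquad \bm,\bm'\in\Z^2.
\]
Substituting the kernel formula for $g_\bk$ from Proposition~\ref{appnoyint} and writing $(H_0-i)^{-1}(\by,\by')$ as the inverse Fourier transform of $\bG$ evaluated at $\by-\by'$, one collects the phase factors and recognizes, after summing over $\bgamma\in\Z^2$ and integrating over $\Omega$ in both spatial variables, the reconstruction $\sum_{\bgamma\in\Z^2}\int_\Omega e^{i(\bx+\bgamma)\cdot\mathbf{q}}\,d\bx=\int_{\R^2}e^{i\by\cdot\mathbf{q}}\,d\by$; this Poisson-summation step pins $\bp=2\pi(\bm-\bk)$, forces $\bm'=\bm$, and leaves exactly $\bG(2\pi(\bm-\bk))$. (Alternatively one can first verify $\mathcal{U}(-i\nabla)\mathcal{U}^*=\intdir{\Omega}{d\bk}{(-i\nabla_{\mathrm{per}}-2\pi\bk)}$ directly on Schwartz functions and then transport the joint spectral measure of the commuting self-adjoint operators $-i\partial_1,-i\partial_2$ through $\mathcal{U}$, applying the resulting functional calculus to the matrix of Borel functions $\bG$.)

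The only genuine obstacle is to make these interchanges of sums and integrals rigorous, since the bare Fourier integral for $(H_0-i)^{-1}$ does not converge absolutely and the reconstruction above is really a Dirac-comb statement. I would handle this exactly as in the proof of Lemma~\ref{reso}: replace $(H_0-i)^{-1}$ by the operator with convolution kernel $\bK_\epsilon=\mathcal{F}^{-1}(\bG e^{-\epsilon<\cdot>})$. As $\bG e^{-\epsilon<\cdot>}$ is $\mathcal{C}^3$ and decays faster than any polynomial, $\bK_\epsilon\in L^1(\R^2)$, every reordering is legitimate, and honest Poisson summation applies, giving for the associated fiber $g_\bk^\epsilon$ the identity $\int_\Omega\int_\Omega\overline{\Psi_{\bm'}(\bx)}\,g_\bk^\epsilon(\bx,\bx')\,\Psi_\bm(\bx')\,d\bx\,d\bx'=\delta_{\bm,\bm'}\,e^{-\epsilon<2\pi(\bm-\bk)>}\,\bG(2\pi(\bm-\bk))$. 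Letting $\epsilon\to0^+$ and invoking the uniform bound $|\bK_\epsilon|\leqslant M_d$ from Proposition~\ref{kernel-existence} --- $M_d$ being locally integrable and decaying like $|\bgamma|^{-3}$ off the diagonal, which dominates both the integration over $\Omega\times\Omega$ and the $\bgamma$-sum --- yields the desired matrix identity, hence $g_\bk=(h_\bk^{(0)}-i)^{-1}$, and with it the proposition.
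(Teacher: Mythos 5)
Your proposal is correct and follows essentially the same route as the paper: reduce to the resolvent identity $g_\bk=(h_\bk^{(0)}-i)^{-1}$ and verify it on the basis $\Psi_\bm\otimes e_j$. The paper's computation acts $g_\bk$ directly on $\Psi_\bm\otimes e_j$, simplifies the phase via $e^{2\pi i\bm\cdot\bgamma}=1$, unfolds $\sum_{\bgamma}\int_\Omega$ into $\int_{\R^2}$ by the substitution $\by=\bx+\bgamma-\bx'$, and concludes by Fourier inversion on the $L^1$ kernel $\mathcal{F}^{-1}(\bG)$; your matrix-element version with the $\epsilon$-regularization spells out the same justification a bit more explicitly but does not differ in substance.
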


\begin{proof}
 
According to Theorem~XIII.85 of \cite{RS4}, to prove \eqref{bloch}, we need
to show that, for $\bk\in\Omega$  $g_\bk:L^2(\Omega,\C^2)\to L^2(\Omega,\C^2)$ satisfies \[g_{\bk}=(h_\bk^{(0)}-i)^{-1}.\]

To this purpose, we will denote by $(e_j)_{j=1,2}$ the vectors of the standard basis in $\C^2$ and $\Psi_\bm=e^{2i\pi \bm\cdot \bx}$. 
We will prove that for all $\bm\in\Z^2$ and $j\in\{1,2\}$ we have \[g_{\bk}(\Psi_\bm\otimes e_j)=(h_\bk^{(0)}-i)^{-1}(\Psi_\bm\otimes e_j)=(\bs\cdot \bF(2\pi(\bm-\bk))-i)^{-1}(\Psi_\bm\otimes e_j).\]

Recall the notation $G(\bp)=(\bs\cdot \bF(\bp)-i)^{-1}\in\mathcal{B}(\C^2)$.
We have that
\begin{equation}\label{hc2}
\begin{split}
\{ & g_{k}(\Psi_{\bm}\otimes e_j)\}(\bx)\\
& =\int_{\Omega}g_{k}(\bx,\bx')(\Psi_{\bm}\otimes e_j)(\bx')d\bx' \\
& =\Psi_{\bm}(\bx) \int_{\Omega}d\bx'\sum_{\bgamma\in\mathbb{Z}^2}e^{2i\pi \bk\cdot(\bx+\bgamma-\bx')}e^{2i\pi \bm\cdot(-\bx+\bx')}\frac{1}{2\pi} \mathcal{F}^{-1}( G )(\bx-\bx'+\bgamma)e_j.
\end{split}
\end{equation}
Because both $\bm$ and $\bgamma$ are in $\mathbb{Z}^2$ we have $e^{2\pi i \bm\cdot \bgamma}=1$, hence in \eqref{hc2} we can replace 
 $e^{2i\pi \bm\cdot(-\bx+\bx')}$ with $e^{-2i\pi \bm\cdot(\bx+\bgamma-\bx')}$. Thus, after a change of variables we obtain 
\begin{equation*}
\begin{split}
\{g_{\bk}(\Psi_{\bm}\otimes e_j)\}(\bx)&=\Psi_{\bm}(\bx)\frac{1}{2\pi} \int_{\mathbb{R}^2} e^{-2i\pi (\bm-\bk)\cdot \by}\mathcal{F}^{-1}( G )(\by)d\by\, e_j \\
&= G(2\pi(\bm-\bk))\; (\Psi_{\bm}\otimes e_j)(\bx).
\end{split}
\end{equation*}
This ends the proof of the proposition.
\end{proof}

\noindent\textbf{Acknowledgments.} It is a pleasure to thank the REB program of CIRM for giving us the opportunity to start this research. JMB and SZ thank the Department of Mathematical Sciences of Aalborg Universitet for its hospitality where part of this work was done. JMB and HC thank CRM Montr\'eal for its hospitality.


\bigskip

\end{document}